\documentclass{llncs}

\usepackage{amssymb,amsmath,graphicx}
\usepackage{color,enumerate,comment,boxedminipage,tkz-graph}
\usepackage{pgf,float,hyperref}
\usepackage{subcaption}
\usepackage[T1]{fontenc}

\captionsetup[subfigure]{labelformat=simple}

\spnewtheorem{remark2}{Remark}{\bfseries}{}

\newcounter{ctrclaim}[theorem]
\newcounter{ctrcase}[theorem]
\newcounter{ctrstep}[theorem]
\newcounter{ctrsubstep}[ctrstep]
\newcounter{ctrsubsubstep}[ctrsubstep]

\newcommand\displaycase[1]{{\bfseries#1}}
\newcommand\faketheorem[1]{{\bfseries#1}}
\newcommand{\clm}[1]{\medskip\phantomsection\refstepcounter{ctrclaim}\noindent\displaycase{Claim \thectrclaim. }{\itshape #1}}
\newcommand{\thmcase}[1]{\medskip\phantomsection\refstepcounter{ctrcase}\noindent\displaycase{Case \thectrcase. }{\itshape #1}\\}
\newcommand{\thmstep}[1]{\medskip\phantomsection\refstepcounter{ctrstep}\noindent\displaycase{Step \thectrstep. }{\itshape #1}\\}
\newcommand{\thmsubstep}[1]{\medskip\phantomsection\refstepcounter{ctrsubstep}\noindent\displaycase{Step \thectrstep.\thectrsubstep. }{\itshape #1}\\}
\newcommand{\thmsubsubstep}[1]{\medskip\phantomsection\refstepcounter{ctrsubsubstep}\noindent\displaycase{Step \thectrstep.\thectrsubstep\alph{ctrsubsubstep}. }{\itshape #1}\\}

\newcommand{\NP}{{\sf NP}}

\newcommand{\problemdef}[3]{
	\begin{center}
		\begin{boxedminipage}{.99\textwidth}
			\textsc{{#1}}\\[2pt]
			\begin{tabular}{ r p{0.8\textwidth}}
				\textit{~~~~Instance:} & {#2}\\
				\textit{Question:} & {#3}
			\end{tabular}
		\end{boxedminipage}
	\end{center}
}
\pagestyle{plain}

\oddsidemargin=1.4cm
\evensidemargin=1.4cm
\textwidth=13.1cm
\textheight=22.2cm
\topmargin=-1cm

\title{On Cycle Transversals and Their Connected Variants in the Absence of a Small Linear Forest\thanks{The research in this paper received support from the Leverhulme Trust (RPG-2016-258).
The last author was supported by Polish National Science Centre grant no. 2018/31/D/ST6/00062.
An extended abstract of this paper appeared in the proceedings of FCT 2019~\cite{FJPP19}.}
}

\author{Konrad K. Dabrowski\inst{1}
\and
Carl Feghali\inst{2}
\and
Matthew Johnson\inst{1}
\and\\
Giacomo Paesani\inst{1}
\and
Dani\"el Paulusma\inst{1}
\and
Pawe{\l} Rz{\k{a}}\.{z}ewski\inst{3}}

\institute{Department of Computer Science, Durham University, UK \\ \email{\{konrad.dabrowski,matthew.johnson2,giacomo.paesani,daniel.paulusma\}@durham.ac.uk} \and Department of Informatics, University of Bergen, Norway \email{carl.feghali@uib.no} \and Faculty of Mathematics and Information Science, Warsaw University of Technology, Warsaw, Poland \email{p.rzazewski@mini.pw.edu.pl}}

\begin{document}

\maketitle

\begin{abstract}
A graph is $H$-free if it contains no induced subgraph isomorphic to~$H$.
We prove new complexity results for the two classical cycle transversal problems {\sc Feedback Vertex Set} and {\sc Odd Cycle Transversal} by showing that they can be solved in polynomial time on $(sP_1+\nobreak P_3)$-free graphs for every integer $s\geq 1$.
We show the same result for the variants {\sc Connected Feedback Vertex Set} and {\sc Connected Odd Cycle Transversal}.
We also prove that the latter two problems are polynomial-time solvable on cographs; this was already known for {\sc Feedback Vertex Set} and {\sc Odd Cycle Transversal}.
We complement these results by proving that {\sc Odd Cycle Transversal} and {\sc Connected Odd Cycle Transversal} are \NP-complete on $(P_2+\nobreak P_5,P_6)$-free graphs.
\end{abstract}

\section{Introduction}\label{s-intro}

Graph transversal problems play a central role in Theoretical Computer Science.
To define the notion of a graph transversal, let~${\cal H}$ be a family of graphs, $G=(V,E)$ be a graph and $S\subseteq V$ be a subset of vertices of~$G$.
The graph $G-S$ is obtained from~$G$ by removing all vertices of~$S$ and all edges incident to vertices in~$S$.
We say that~$S$ is an \emph{${\cal H}$-transversal} of~$G$ if $G-S$ is \emph{${\cal H}$-free}, that is, if $G-S$ contains no induced subgraph isomorphic to a graph of~${\cal H}$.
In other words, $S$ intersects every induced copy of every graph of~${\cal H}$ in~$G$.
Let~$C_r$ and~$P_r$ denote the cycle and path on~$r$ vertices, respectively.
Then~$S$ is a \emph{vertex cover}, \emph{feedback vertex set}, or \emph{odd cycle transversal} if~$S$ is an ${\cal H}$-transversal for, respectively, ${\cal H}=\{P_2\}$ (that is, $G-S$ is edgeless), ${\cal H}=\{C_3,C_4,\ldots\}$ (that is, $G-S$ is a forest), or ${\cal H}=\{C_3,C_5,\ldots\}$ (that is, $G-S$ is bipartite).

Usually the goal is to find a transversal of minimum size in some given graph.
In this paper we focus on the decision problems corresponding to the three transversals defined above.
These are the {\sc Vertex Cover}, {\sc Feedback Vertex Set} and {\sc Odd Cycle Transversal} problems, which are to decide whether a given graph has a vertex cover, feedback vertex set or odd cycle transversal, respectively, of size at most~$k$ for some given positive integer~$k$.
Each of these three problems is well studied and is well known to be \NP-complete.

We may add further constraints to a transversal.
In particular, we may require a transversal of a graph~$G$ to be \emph{connected}, that is, to induce a connected subgraph of~$G$.
The corresponding decision problems for the three above transversals are then called {\sc Connected Vertex Cover}, {\sc Connected Feedback Vertex Set} and {\sc Connected Odd Cycle Transversal}, respectively.

Garey and Johnson~\cite{GJ77} proved that {\sc Connected Vertex Cover} is \NP-complete even on planar graphs of maximum degree~$4$ (see, for example,~\cite{FM09,PH08,WKO91} for \NP-completeness results for other graph classes).
Grigoriev and Sitters~\cite{GS09} proved that {\sc Connected Feedback Vertex Set} is \NP-complete even on planar graphs with maximum degree~$9$. More recently,
Chiarelli et al.~\cite{CHJMP18} proved that {\sc Connected Odd Cycle Transversal} is \NP-complete even on graphs of arbitrarily large girth and on line graphs.

As all three decision problems and their connected variants are \NP-complete, 
we can consider how to restrict the input to some special graph class in order to achieve tractability.
Note that this approach is in line with the aforementioned results in the literature, where \NP-completeness was proven on special graph classes. It is also in line with with, for instance, polynomial-time results for {\sc Connected Vertex Cover} by Escoffier, Gourv\`es and Monnot~\cite{EGM10} (for chordal graphs) and Ueno, Kajitani and Gotoh~\cite{UKG88} (for graphs of maximum degree at most~$3$ and trees).

Just as in most of these papers, we consider \emph{hereditary} graph classes, that is, graph classes closed under vertex deletion.
Hereditary graph classes form a rich framework that captures many well-studied graph classes.
It is not difficult to see that every hereditary graph class~${\cal G}$ can be characterized by a (possibly infinite) set~${\cal F}_{\cal G}$ of forbidden induced subgraphs.
If $|{\cal F}_{\cal G}|=1$, say ${\cal F}=\{H\}$, then~${\cal G}$ is said to be \emph{monogenic}, and every graph $G\in {\cal G}$ is said to be \emph{$H$-free}.
Considering monogenic graph classes can be seen as a natural first step for increasing our knowledge of the complexity of an \NP-complete problem in a \emph{systematic} way. Hence, we consider the following research question:

\medskip
\noindent
{\it How does the structure of a graph~$H$ influence the computational complexity of a graph transversal problem for input graphs that are $H$-free?}

\medskip
\noindent
Note that different graph transversal problems may behave differently on some class of $H$-free graphs.
However, the general strategy for obtaining complexity results is to first try to prove that the restriction to $H$-free graphs is \NP-complete whenever~$H$ contains a cycle or the claw (the 4-vertex star).
This is usually done by showing, respectively, that the problem is \NP-complete on graphs of arbitrarily large {\it girth} (length of a shortest cycle) and on line graphs, which form a subclass of claw-free graphs.
If this is the case, then we are left to consider the case when~$H$ does not contain a cycle, implying that~$H$ is a forest, and does not contain a claw either, implying that~$H$ is a \emph{linear forest}, that is, the disjoint union of one or more paths.

\subsection{The Graph $H$ Contains a Cycle or Claw}\label{s-claw}

It follows from Poljak's construction~\cite{Po74} that {\sc Vertex Cover} is \NP-complete on graphs of arbitrarily large girth. Hence, {\sc Vertex Cover} is \NP-complete on $H$-free graphs if~$H$ contains a cycle.
However, {\sc Vertex Cover} becomes polynomial-time solvable when restricted to claw-free graphs~\cite{Mi80,Sh80}. 
In contrast, the other five problems {\sc Connected Vertex Cover}, {\sc (Connected) Feedback Vertex Set} and {\sc (Connected) Odd Cycle Transversal} are all \NP-complete on graphs of arbitrarily large girth and on line graphs; see Table~\ref{t-thetable}.
Hence, for these five problems, it remains to consider only the case when~$H$ is a linear forest.

\subsection{The Graph $H$ Is a Linear Forest}

In this paper, we focus on proving new complexity results for {\sc Feedback Vertex Set}, {\sc Connected Feedback Vertex Set}, {\sc Odd Cycle Transversal} and {\sc Connected Odd Cycle Transversal} on $H$-free graphs.
It follows from Section~\ref{s-claw} that we may assume that~$H$ is a linear forest.
Below we first discuss the known polynomial-time solvable cases.
As we will use algorithms for {\sc Vertex Cover} and {\sc Connected Vertex Cover} as subroutines for our new algorithms, we include these two problems in our discussion.

For every $s\geq 1$, {\sc Vertex Cover} (by combining the results of~\cite{BY89,TIAS77}) and {\sc Connected Vertex Cover}~\cite{CHJMP18} are polynomial-time solvable on $sP_2$-free graphs.\footnote{The graph $G+\nobreak H$ is the disjoint union of graphs~$G$ and~$H$ and~$sG$ is the disjoint union of~$s$ copies of~$G$; see Section~\ref{s-pre}.}
Moreover, {\sc Vertex Cover} is also polynomial-time solvable on $(sP_1+\nobreak P_6)$-free graphs, for every $s\geq 0$~\cite{GKPP19}, as is the case for {\sc Connected Vertex Cover} on $(sP_1+\nobreak P_5)$-free graphs~\cite{JPP18}.
Their complexity on $P_r$-free graphs is unknown for $r\geq 7$ and $r\geq 6$, respectively.

Both {\sc Feedback Vertex Set} and {\sc Odd Cycle Transversal} are polynomial-time solvable on permutation graphs~\cite{BK85}, and thus on $P_4$-free graphs.
Recently, Okrasa and Rz{\k{a}}\.{z}ewski~\cite{OR19} proved that {\sc Odd Cycle Transversal} is \NP-complete on $P_{13}$-free graphs.
A small modification of their construction yields the same result for {\sc Connected Odd Cycle Transversal}.
The complexity of {\sc Feedback Vertex Set} and {\sc Connected Feedback Vertex Set} is unknown when restricted to $P_r$-free graphs for $r\geq 5$.
For every $s\geq 1$, both problems and their connected variants are polynomial-time solvable on $sP_2$-free graphs~\cite{CHJMP18}, using the price of connectivity for feedback vertex set~\cite{BHKP17,HJMP16}.\footnote{The price of connectivity concept was introduced by Cardinal and Levy~\cite{CL10} for vertex cover; see also, for example,~\cite{Ca19,CCFS14,CS14}.}

\subsection{Our Results}\label{s-our}

In Section~\ref{s-p4} we prove that {\sc Connected Feedback Vertex Set} and {\sc Connected Odd Cycle Transversal} are polynomial-time solvable on $P_4$-free graphs, just as is the case for {\sc Feedback Vertex Set} and {\sc Odd Cycle Transversal}.
In Section~\ref{s-p1p3} we prove that for every $s\geq 1$, these four problems are all polynomial-time solvable on $(sP_1+\nobreak P_3)$-free graphs; see also Table~\ref{t-thetable}.
Finally, in Section~\ref{s-hard}, we show that {\sc Odd Cycle Transversal} and {\sc Connected Odd Cycle Transversal} are \NP-complete on $(P_2+\nobreak P_5,P_6)$-free graphs, that is, graphs that are both $(P_2+\nobreak P_5)$-free and $P_6$-free.

\begin{table}[h]
\centering
\begin{tabular}{|l|l|l|l|l|l|l|}
\hline
& girth~$p$ & line graphs & $sP_2$-free &$P_4$-free &$sP_1+P_r$-free\\[-1pt]
\hline
{\sc Vertex Cover} & \NP-c~\cite{Po74}\; &P\hspace*{4.7mm} \cite{Mi80,Sh80}&P~\cite{BY89,TIAS77}&P&P: $s\geq 0$, $r=6$~\cite{GKPP19}\\[-1pt]
\hline
{\sc Feedback Vertex Set} & \NP-c~\cite{Po74}\;  &\NP-c~\cite{Sp83} &P~\cite{CHJMP18}&P~\cite{BK85}&P: $s\geq 0$, $r=3$$*$\\[-1pt]	
\hline
{\sc Odd Cycle Transversal} &\NP-c~\cite{CHJMP18} &\NP-c~\cite{CHJMP18} &P~\cite{CHJMP18}&P~\cite{BK85}&P: $s\geq 0$, $r=3$$*$\\[-1pt]	
\hline
{\sc Con. Vertex Cover}  &\NP-c~\cite{Mu17} &\NP-c~\cite{Mu17} &P~\cite{CHJMP18}&P&P: $s\geq 0$, $r=5$~\cite{JPP18}\\[-1pt]
\hline
{\sc Con. Feedback Vertex Set} &\NP-c~\cite{CHJMP18}&\NP-c~\cite{CHJMP18}&P~\cite{CHJMP18} &P$*$&P: $s\geq 0$, $r=3$$*$\\[-1pt]
\hline
{\sc  Con. Odd Cycle Transversal} &\NP-c~\cite{CHJMP18} &\NP-c~\cite{CHJMP18}&P~\cite{CHJMP18} &P$*$&P: $s\geq 0$, $r=3$$*$\\[-1pt]
\hline			
\end{tabular}
\vspace*{2mm}
\caption{The complexities of the three connected transversal problems together with the original transversal problems on graphs of girth at least~$p$ for every (fixed) constant $p\geq\nobreak 3$, on line graphs,
and on $H$-free graphs for various linear forests~$H$.
In particular, {\sc Feedback Vertex Set} can be shown to be \NP-complete on graphs of arbitrarily large girth by using Poljak's construction (see~\cite{BDFJP19,MPRS12}). We also note that  Munro~\cite{Mu17} showed that {\sc Feedback Vertex Set} is 
\NP-complete even on line graphs of planar cubic bipartite graphs. Unreferenced results directly follow from other results in the table, and results marked with $*$ are new results proven in this paper. Our two other new results, namely that {\sc Odd Cycle Transversal} and {\sc Connected Odd Cycle Transversal} are \NP-complete on $(P_2+\nobreak P_5,P_6)$-free graphs, are {\bf not} included in the table.}\label{t-thetable}
\end{table}

\noindent
To prove our polynomial-time results, we rely on two proof ingredients.
The first one is that we use known algorithms for {\sc Vertex Cover} and {\sc Connected Vertex Cover} restricted to $H$-free graphs as subroutines in our new algorithms.
The second is that we consider the connected variant of the transversal problems in a more general form.
For {\sc Connected Vertex Cover} this variant is defined as follows:

\problemdef{{\sc Connected Vertex Cover Extension}}{a graph $G=(V,E)$, a subset $W\subseteq V$ and a positive integer $k$.}{does~$G$ have a connected vertex cover~$S_W$ with $W\subseteq S_W$ and $|S_W|\leq k$?}

\noindent
Note that {\sc Connected Vertex Cover Extension} becomes the original problem if $W=\emptyset$.
We define the problems {\sc Connected Feedback Vertex Set Extension} and {\sc Connected Odd Cycle Transversal Extension} analogously.
We will prove all our results for connected feedback vertex sets and connected odd cycle transversals for the extension versions.
These extension versions will serve as auxiliary problems for some of our inductive arguments, but this approach also leads to slightly stronger results.

\begin{remark2}\label{rem:1}
For any connected extension variant of these problems on ${\cal H}$-transversals, we may assume that the input graph~$G$ is connected.
If it is not, then either all but at most one connected component of~$G$ is ${\cal H}$-free and does not intersect~$W$, in which case it need not be considered, or the answer is immediately no.
It is easy to check ${\cal H}$-freeness for the three problems we consider.
\end{remark2}

\begin{sloppypar}
\begin{remark2}\label{rem:2}
Note that one could also define extension versions for any original transversal problem
(that is, where there is no requirement for the transversal to be connected).
However, such extension versions will be polynomially equivalent.
Indeed, we can solve the extension version on the input $(G,W,k)$ by considering the original problem on the input $(G-W,\max\{0, k-|W|\})$ and adding~$W$ to the solution.
However, due to the connectivity condition, we cannot use this approach for the connected variants.
\end{remark2}
\end{sloppypar}

\begin{remark2}\label{rem:3}
It is known that {\sc Vertex Cover} is polynomial-time solvable on $(P_1+\nobreak H)$-free graphs whenever this is the case on $H$-free graphs.
This follows from a well-known observation, see, for example,~\cite{Mo12}: one can solve the complementary problem of finding a maximum independent set in a $(P_1+\nobreak H)$-free graph by solving this problem on each $H$-free graph obtained by removing a vertex and all of its neighbours.
However, this trick does not work for {\sc Connected Vertex Cover}.
Moreover, it does not work for {\sc Feedback Vertex Set} and {\sc Odd Cycle Transversal} and their connected variants either.
\end{remark2}

\section{Preliminaries}\label{s-pre}

Let $G=(V,E)$ be a graph.
For a set $S\subseteq V$, we write~$G[S]$ to denote the subgraph of~$G$ induced by~$S$.
We say that~$S$ is \emph{connected} if~$G[S]$ is connected.
We write $G-S$ to denote the graph $G[V\setminus S]$.
A subset~$D\subseteq V$ is a \emph{dominating} set of~$G$ if every vertex of $V\setminus D$ is adjacent to at least one vertex of~$D$.
An edge~$uv$ of a graph $G=(V,E)$ is \emph{dominating} if $\{u,v\}$ is a dominating set.
The \emph{complement} of~$G$ is the graph $\overline{G}=(V,\{uv\; |\; uv\not \in E\; \mbox{and}\; u\neq v\})$.
The \emph{neighbourhood} of a vertex $u\in V$ is the set $N_G(u)=\{v\; |\; uv\in E\}$ and for $U\subseteq V$, we let $N_G(U)=\bigcup_{u\in U}N(u)\setminus U$.
We omit the subscript when there is no ambiguity.
We denote the \emph{degree} of a vertex $u\in V$ by $\deg(u)=|N_G(u)|$.

Let $G=(V,E)$ be a graph and let $S\subseteq V$.
Then~$S$ is a \emph{clique} if the vertices of~$S$ are pairwise adjacent and an \emph{independent set} if the vertices of~$S$ are pairwise non-adjacent.
A graph is \emph{complete} if its vertex set is a clique.
We let~$K_r$ denote the complete graph on~$r$ vertices.
Let $T\subseteq V$ with $S\cap T=\emptyset$.
Then~$S$ is \emph{complete} to~$T$ if every vertex of~$S$ is adjacent to every vertex of~$T$, and~$S$ is \emph{anti-complete} to~$T$ if there are no edges between~$S$ and~$T$.
In the first case, we also say that~$S$ is \emph{complete} to~$G[T]$ and in the second case \emph{anti-complete} to~$G[T]$.

A graph is \emph{bipartite} if its vertex set can be partitioned into at most two independent sets.
A bipartite graph is \emph{complete} bipartite if its vertex set can be partitioned into two independent sets~$X$ and~$Y$ such that
$X$ is complete to~$Y$. If $X$ or $Y$ has size~$1$, the complete bipartite graph is said to be a {\it star}.
Note that every edge of a complete bipartite graph is dominating.

Let~$G_1$ and~$G_2$ be two vertex-disjoint graphs.
The \emph{union} operation creates the \emph{disjoint union} $G_1+\nobreak G_2$ of~$G_1$ and~$G_2$, that is, the graph with vertex set $V(G_1)\cup V(G_2)$ and edge set $E(G_1)\cup E(G_2)$.
We denote the disjoint union of~$r$ copies of~$G_1$ by~$rG_1$.
The \emph{join} operation adds an edge between every vertex of~$G_1$ and every vertex of~$G_2$.
A graph~$G$ is a \emph{cograph} if~$G$ can be generated from~$K_1$ by a sequence of join and union operations.
A graph is a cograph if and only if it is $P_4$-free (see, for example,~\cite{BLS99}).

The following lemma is well known, but we include a short proof for completeness.

\begin{lemma}\label{l-p4}
Every connected $P_4$-free graph on at least two vertices has a spanning complete bipartite subgraph which can be found in polynomial time.
\end{lemma}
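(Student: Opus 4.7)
The plan is to exploit the standard structural fact that $P_4$-free graphs are exactly the cographs, which admit a recursive decomposition by join and union operations starting from single vertices.

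First, I would recall that since $G$ is a cograph on at least two vertices, it is either the disjoint union or the join of two smaller (nonempty) cographs $G_1$ and $G_2$. If $G$ were obtained as a disjoint union, then $G$ would be disconnected, contradicting our hypothesis. Hence $G = G_1 \ast G_2$ is a join, which means that every vertex of $V(G_1)$ is adjacent to every vertex of $V(G_2)$. Setting $X := V(G_1)$ and $Y := V(G_2)$ gives a partition of $V(G)$ such that $X$ is complete to $Y$ in $G$; the bipartite graph with parts $X, Y$ and all edges between them is therefore a spanning complete bipartite subgraph of $G$.

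For the algorithmic part, the cleanest route is to avoid explicitly constructing the cotree. I would instead argue as follows: the defining property above is equivalent to saying that the complement $\overline{G}$ is disconnected, since $X$ being complete to $Y$ in $G$ means there are no edges between $X$ and $Y$ in $\overline{G}$. So one computes $\overline{G}$ in polynomial time, checks that it is disconnected (which it must be, by the structural argument), and then takes $X$ to be the vertex set of any connected component of $\overline{G}$ and $Y := V(G) \setminus X$. Verifying that every pair $(x,y) \in X \times Y$ is an edge of $G$ takes polynomial time, so the pair $(X,Y)$ can be produced in polynomial time.

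I do not expect any serious obstacle; the only point that needs a moment of care is ruling out the union case, which is immediate from connectedness, and observing that the complement-disconnectedness characterisation matches the join decomposition, which is routine. The lemma is essentially a folklore restatement of the fact that connected cographs are joins.
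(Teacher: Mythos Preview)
Your proposal is correct and essentially identical to the paper's proof: both use that a connected cograph on at least two vertices is a join of two nonempty parts, giving the spanning complete bipartite subgraph, and both find the partition algorithmically by taking a connected component of $\overline{G}$ versus its complement. There is no meaningful difference in approach.
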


\begin{proof}
Let~$G$ be a connected $P_4$-free graph on at least two vertices.
Then $G$ is the join of two graphs~$G[X]$ and~$G[Y]$. Hence, $G$ has a spanning complete bipartite subgraph with partition classes~$X$ and~$Y$. Note that this implies that~$\overline{G}$ is disconnected. In order to find a (not necessarily unique) spanning complete bipartite subgraph of~$G$ with partition classes~$X$ and~$Y$ in polynomial time, we put the vertices of one connected component of~$\overline{G}$ in~$X$ and all the other vertices of~$\overline{G}$ in~$Y$.\qed
\end{proof}

\noindent
Grzesik et al.~\cite{GKPP19} gave a polynomial-time algorithm for finding a maximum independent set of a $P_6$-free graph in polynomial time. As the complement $V(G)\setminus I$ of every independent set~$I$ of a graph~$G$ is a vertex cover, their result implies that {\sc Vertex Cover} is polynomial-time solvable on $P_6$-free graphs.
Using the folklore trick mentioned in Remark~\ref{rem:3} (see also, for example,~\cite{JPP18,Mo12}) their result can also be formulated as follows.

\begin{theorem}[\cite{GKPP19}]\label{sp1p6-vc}
For every $s\geq 0$, {\sc Vertex Cover} can be solved in polynomial time on $(sP_1+\nobreak P_6)$-free graphs.
\end{theorem}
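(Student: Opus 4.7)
The plan is to reduce \textsc{Vertex Cover} to \textsc{Maximum Independent Set} (since $S$ is a vertex cover of $G$ if and only if $V(G)\setminus S$ is an independent set), and then to induct on $s$. The base case $s=0$ is precisely the Grzesik et al.\ result that a maximum independent set of a $P_6$-free graph can be found in polynomial time, which the excerpt allows us to invoke.

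For the inductive step, I would use the folklore branching already alluded to in Remark~\ref{rem:3}. Let $G$ be an $(sP_1+P_6)$-free graph with $s\geq 1$. A maximum independent set of $G$ is either empty or contains some vertex $v$; in the latter case, removing $v$ leaves a maximum independent set of $G-N[v]$. So it suffices to compute, for every $v\in V(G)$, a maximum independent set of $G-N[v]$, add $v$, and return the best candidate (compared against $\emptyset$).

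The key structural observation is that each $G-N[v]$ is $((s-1)P_1+P_6)$-free: if $G-N[v]$ contained an induced copy of $(s-1)P_1+P_6$, then since $v$ has no neighbour in $V(G)\setminus N[v]$, adding $v$ as an extra isolated vertex would yield an induced $sP_1+P_6$ in $G$, contradicting the hypothesis on $G$. By induction, \textsc{Maximum Independent Set} can be solved in polynomial time on each such $G-N[v]$, and hence on $G$. Unrolling the recursion, the branching depth is $s$ and the branching factor is at most $n$, so we make $n^{O(s)}$ calls to the base-case algorithm of Grzesik et al.; for every fixed $s$ this is polynomial in $n$.

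There is no real obstacle beyond the base case itself. The deep content of the statement is entirely contained in the $P_6$-free algorithm of \cite{GKPP19}; the reduction from the $s\geq 1$ case to the $s-1$ case is a one-line branching argument whose correctness rests only on the elementary observation about $N[v]$ above. The only point to verify carefully is that the $(sP_1+P_6)$-freeness is preserved under the operation $G\mapsto G-N[v]$ in the strong form needed (i.e.\ that we drop $s$ by exactly one), which is immediate from the definition of induced subgraph.
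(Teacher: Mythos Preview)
Your proposal is correct and is exactly the approach the paper takes: the paper does not give a formal proof of Theorem~\ref{sp1p6-vc} but simply states that the result follows from the $P_6$-free algorithm of Grzesik et al.\ together with the folklore trick of Remark~\ref{rem:3}, which is precisely the branching on $G-N[v]$ that you spell out.
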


\noindent 
We recall also that {\sc Connected Vertex Cover} is polynomial-time solvable on $(sP_1+\nobreak P_5)$-free graphs~\cite{JPP18}.
We will need the extension version of this result.
Its proof is based on a straightforward adaption of the proof for {\sc Connected Vertex Cover} on $(sP_1+\nobreak P_5)$-free graphs~\cite{JPP18}.\footnote{See Appendix~\ref{a-cvc}, where we include a proof for reviewing purposes.}

\begin{theorem}[\cite{JPP18}]\label{t-cvc}
For every $s\geq 0$, {\sc Connected Vertex Cover Extension} can be solved in polynomial time on $(sP_1+\nobreak P_5$)-free graphs.
\end{theorem}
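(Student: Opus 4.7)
The plan is to mirror the inductive argument from~\cite{JPP18} for {\sc Connected Vertex Cover} on $(sP_1+P_5)$-free graphs, with the only modification being that the forced set~$W$ must be carried through the induction. I proceed by induction on~$s$, and by Remark~\ref{rem:1} I may assume~$G$ is connected.

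For the base case $s=0$, the graph~$G$ is $P_5$-free. Since a set $S\supseteq W$ is a connected vertex cover of~$G$ iff $V(G)\setminus S$ is an independent set disjoint from~$W$ and $G[S]$ is connected, the task reduces to finding an independent set $I\subseteq V(G)\setminus W$ of maximum size such that $G[V(G)\setminus I]$ is connected. I would adapt the polynomial-time algorithm of~\cite{JPP18} for {\sc Connected Vertex Cover} on $P_5$-free graphs, which enumerates candidate dominating substructures (e.g.\ the dominating clique or induced $P_3$ guaranteed by Bacs\'o--Tuza) and reduces each case to a (weighted) maximum independent set computation, polynomial on $P_5$-free graphs. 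The constraint $W\subseteq S$ is incorporated simply by forbidding vertices of~$W$ from the independent-set side and verifying connectivity of the complement.

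For the inductive step ($s\geq 1$), assume the result for $((s-1)P_1+P_5)$-free graphs. If~$G$ has no induced $sP_1$ then it is already $((s-1)P_1+P_5)$-free, and the inductive hypothesis applies. Otherwise, I pick an independent set $A=\{a_1,\ldots,a_s\}$ of size~$s$, which can be found in polynomial time for fixed~$s$. The key structural observation is that $G-N[A]$ is $P_5$-free: any induced $P_5$ there, together with~$A$, would form an induced $sP_1+P_5$ in~$G$. I then branch over the $2^s$ choices of $A':=S_W\cap A$. In each branch I add $A'$ to~$W$ and, for every $a\in A\setminus A'$, force $N(a)\subseteq S_W$ by adding $N(a)$ to~$W$ and deleting~$a$ from~$G$ (all its incident edges are now covered). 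After this, every remaining undecided vertex lies in the $P_5$-free graph $G-N[A]$, so I can finish each branch by invoking the base-case subroutine.

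The main obstacle I expect is the connectivity requirement propagating across branches. After the forcing step, the graph $G[W']$ on the enlarged forced set $W'$ typically splits into several components, which have to be reconnected through vertices selected from the $P_5$-free residual part. This is precisely what the extension formulation is designed to handle: rather than producing any connected vertex cover of $G-N[A]$, the base-case subroutine must output a vertex set that, together with~$W'$, forms a connected cover of the whole of~$G$. Keeping track of the components of $G[W']$ and choosing connectors inside the $P_5$-free part, while simultaneously minimising the total size, is the delicate step, and is the reason the induction is run on {\sc Connected Vertex Cover Extension} instead of plain {\sc Connected Vertex Cover}.
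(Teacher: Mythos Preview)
Your inductive step contains a genuine gap. The assertion ``every remaining undecided vertex lies in the $P_5$-free graph $G-N[A]$'' holds only in the single branch $A'=\emptyset$. Whenever some $a_i\in A'$ (that is, $a_i\in S_W$), the neighbours of~$a_i$ are \emph{not} forced into~$W'$, so they remain undecided while lying inside $N[A]$. In the extreme branch $A'=A$ you have merely added~$A$ to~$W$; the undecided set is then $V\setminus(W\cup A)$, which need not be $P_5$-free at all. Thus the $2^s$ branches do not all reduce to the base case.

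Even in the favourable branch $A'=\emptyset$ the reduction is incomplete. What you must solve is {\sc Connected Vertex Cover Extension} on the graph $G-A$ with the enlarged forced set $W'=W\cup N(A)$, and the connectivity constraint lives in $G-A$, which is still only $(sP_1+P_5)$-free. Knowing that the \emph{free} vertices sit inside a $P_5$-free induced subgraph does not yield an instance of the extension problem on a $P_5$-free graph: the extension problem takes a single graph (which must lie in the class) together with a forced set, not a pair consisting of a residual graph and a larger ambient graph in which connectivity is tested. Your final paragraph identifies precisely this obstacle but does not resolve it; invoking ``the base-case subroutine'' on $G-N[A]$ while demanding connectivity in~$G$ is not a well-posed call to the $P_5$-free algorithm.

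The paper's proof is not inductive on~$s$. It exploits Lemma~\ref{l-2} to obtain a connected dominating set~$D$ that is either a clique or of size at most $2s^2+s+3$, branches over the at most $O(n^{2s^2+s+3})$ choices of $D^*=D\setminus S$, contracts the remaining part $D\setminus D^*$ (together with~$W$ and $N(D^*)$) to produce a graph with a universal vertex~$y$, and then reduces to the auxiliary {\sc Connected Vertex Cover Completion} problem on cover-complete triples, which is handled by Lemmas~\ref{l-double}--\ref{t-ind2} from~\cite{JPP18}. The contraction step is exactly what lets the connectivity requirement be controlled by a single vertex, and this mechanism has no counterpart in your outline.
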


\section{The Case $\mathbf{H=P_4}$}\label{s-p4}

Recall that Brandst{\"a}dt and Kratsch~\cite{BK85} proved that {\sc Feedback Vertex Set} and {\sc Odd Cycle Transversal} can be solved in polynomial time on permutation graphs, which form a superclass of the class of $P_4$-free graphs.
Hence, we obtain the following proposition.

\begin{proposition}[\cite{BK85}]\label{p-oct}
{\sc Feedback Vertex Set} and {\sc Odd Cycle Transversal} can be solved in polynomial time on $P_4$-free graphs.
\end{proposition}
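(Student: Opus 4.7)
The plan is to give a self-contained proof that exploits the fact that $P_4$-free graphs are precisely cographs, which admit a recursive decomposition via disjoint union and join operations encoded by a cotree computable in polynomial time (see~\cite{BLS99}). I would perform dynamic programming over the cotree, computing at each node the minimum size of a feedback vertex set, of an odd cycle transversal, and of a vertex cover, using the (folklore) polynomial-time algorithm for vertex cover on cographs as a subroutine.

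The union case is immediate: if $G = G_1 + G_2$, then both parameters are additive, since cycles lie inside a single connected component. The substantive step is the join case $G = G_1 \otimes G_2$, where the presence of every cross edge forces any solution $S$ into a short menu of templates. For feedback vertex set, if both $V(G_1) \setminus S$ and $V(G_2) \setminus S$ had at least two elements, the join would contain a $K_{2,2}$ and hence a $C_4$, so at least one side must be almost fully deleted. This leaves four candidates: delete $V(G_1)$ in full (or, symmetrically, $V(G_2)$) and take an optimal feedback vertex set of the other side; or keep a single vertex $v$ of $G_1$ (or, symmetrically, of $G_2$), in which case an edge in the other side would form a triangle with $v$, so the remaining graph on the other side must be edgeless and we take an optimal vertex cover there. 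For odd cycle transversal, the analysis is analogous: if both sides retain at least one vertex and either side has an edge, a triangle arises across the join; hence either one side is deleted entirely and the other is optimally transversed, or both surviving parts are independent sets (so $G - S$ is complete bipartite, hence bipartite), reducing to two vertex cover computations. In symbols, $\mathrm{oct}(G)=\min\{|V(G_1)|+\mathrm{oct}(G_2),\,|V(G_2)|+\mathrm{oct}(G_1),\,\mathrm{vc}(G_1)+\mathrm{vc}(G_2)\}$, with a similar four-term formula for $\mathrm{fvs}(G)$.

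The main potential obstacle is justifying exhaustiveness of the join-case templates, in particular checking that no "mixed" configuration is overlooked and that recursing on vertex cover subproblems (rather than on feedback vertex set or odd cycle transversal subproblems) is actually correct. Once this bookkeeping is in place, the cotree recursion is linear in its size and each step requires only one cograph vertex cover computation, giving an overall polynomial-time algorithm. A shorter route is of course available: cographs form a subclass of permutation graphs, so the proposition follows directly from~\cite{BK85}. The cotree-based argument is worth spelling out only insofar as it is self-contained and provides a template for the connected variants treated later in Section~\ref{s-p4}.
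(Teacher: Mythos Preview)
Your argument is correct, but it takes a different route from the paper. The paper does not give a proof of this proposition at all: it simply observes that $P_4$-free graphs form a subclass of permutation graphs and invokes the result of Brandst{\"a}dt and Kratsch~\cite{BK85} for permutation graphs directly. You explicitly acknowledge this shortcut at the end of your proposal, so you are aware of it.

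Your alternative cotree dynamic-programming argument is sound. The join-case analyses are exhaustive: for {\sc Feedback Vertex Set}, the $K_{2,2}$ obstruction forces at most one surviving vertex on one side, yielding the four-term recurrence $\mathrm{fvs}(G_1\otimes G_2)=\min\{|V(G_1)|+\mathrm{fvs}(G_2),\;|V(G_2)|+\mathrm{fvs}(G_1),\;|V(G_1)|-1+\mathrm{vc}(G_2),\;|V(G_2)|-1+\mathrm{vc}(G_1)\}$; for {\sc Odd Cycle Transversal}, the triangle obstruction forces both surviving parts to be independent whenever both are non-empty, giving the three-term recurrence you state. The vertex-cover values needed are themselves computable by the standard cotree recurrence, so the whole scheme runs in polynomial time.

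The trade-off: the paper's approach is a one-line citation, whereas yours is self-contained and makes the structure explicit. Your remark that the cotree argument ``provides a template for the connected variants'' is only partially borne out, however: the paper's proofs of Theorems~\ref{t-cfve} and~\ref{t-cocte} do not proceed by cotree recursion but instead use Lemma~\ref{l-p4} to obtain a spanning complete bipartite subgraph and branch on whether the transversal meets both sides, invoking Proposition~\ref{p-oct} as a black box rather than unfolding it.
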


In this section, we prove that the (extension versions of the) connected variants of {\sc Feedback Vertex Set} and {\sc Odd Cycle Transversal} are also polynomial-time solvable on $P_4$-free graphs.
We make use of Proposition~\ref{p-oct} in the proofs.

\begin{theorem}\label{t-cfve}
{\sc Connected Feedback Vertex Set Extension} can be solved in polynomial time on $P_4$-free graphs.
\end{theorem}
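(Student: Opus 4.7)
The plan is to exploit the join decomposition of a connected $P_4$-free graph. By Remark~\ref{rem:1} we may assume that $G=(V,E)$ is connected, and we dispose of the trivial case $|V|\le 1$ separately; otherwise Lemma~\ref{l-p4} computes in polynomial time a partition $V = X \cup Y$ with $X, Y \neq \emptyset$ such that $X$ is complete to $Y$ in $G$ (i.e.\ $G$ is the join of $G[X]$ and $G[Y]$).

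The key structural observation is that any feedback vertex set $S$ of $G$ satisfies $|X \setminus S| \le 1$ or $|Y \setminus S| \le 1$: otherwise two vertices from each side outside $S$ induce a $C_4$ in $G - S$, contradicting the forest property. By symmetry we may focus on the case $|X \setminus S_W| \le 1$ and enumerate $O(|V|)$ ``types'' of candidate solution -- either $X \subseteq S_W$, or $X \setminus \{x\} \subseteq S_W$ with $x \notin S_W$ for some specific $x \in X \setminus W$. For each type we reduce the problem of minimising $|S_W|$ to a known polynomial-time subroutine on the $P_4$-free graph $G[Y]$.

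In the first type, $S_W \cap Y$ must be a feedback vertex set of $G[Y]$ containing $W \cap Y$; by Remark~\ref{rem:2} this reduces to \textsc{Feedback Vertex Set} on $G[Y] - (W \cap Y)$, solvable in polynomial time by Proposition~\ref{p-oct}. In the second type, since $x$ is join-adjacent to all of $Y$, the set $Y \setminus S_W$ must be independent to avoid a triangle through $x$, so $S_W \cap Y$ is a vertex cover of $G[Y]$ containing $W \cap Y$; this reduces analogously (by the trick of Remark~\ref{rem:2}) to \textsc{Vertex Cover} on $G[Y] - (W \cap Y)$, solvable by Theorem~\ref{sp1p6-vc}. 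Connectivity of $S_W$ is automatic from the join structure as soon as both the ``$X$-part'' of $S_W$ (namely $X$ or $X \setminus \{x\}$) and $S_W \cap Y$ are nonempty, since then $S_W \cap Y$ lies in the common neighbourhood of the $X$-part.

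The main obstacle is handling the few corner cases where connectivity is not granted by the join. These are: (i) the minimum FVS/VC on $G[Y]$ is empty while the $X$-part happens to be disconnected in $G$, in which case we simply augment $S_W$ with a single vertex of $Y$ (this remains feasible because $G[Y]$ is a forest, respectively edgeless, in the two subcases); and (ii) the second type with $|X| = 1$, where $X \setminus \{x\} = \emptyset$ and so $S_W \subseteq Y$ -- here we instead solve \textsc{Connected Vertex Cover Extension} on $G[Y]$ directly, which is polynomial by Theorem~\ref{t-cvc} (applied with $s = 0$, using that $P_4$-free implies $P_5$-free). The symmetric cases obtained by swapping $X$ and $Y$ are handled identically. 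Taking the minimum $|S_W|$ over all $O(|V|)$ types and comparing with $k$ gives the answer in polynomial time.
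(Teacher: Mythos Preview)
Your proof is correct, but it is organised differently from the paper's. Both arguments start from the join decomposition $V=X\cup Y$ given by Lemma~\ref{l-p4}, but then diverge.

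The paper splits on whether the sought set $S$ meets both sides or only one. In the two-sided case it simply guesses a pair $u\in S\cap X$, $v\in S\cap Y$; since $uv$ is a dominating edge of~$G$, \emph{any} feedback vertex set containing $\{u,v\}$ is automatically connected, so the problem collapses to ordinary {\sc Feedback Vertex Set} on $G-(W\cup\{u,v\})$ via Proposition~\ref{p-oct}. The one-sided case ($S\subseteq X$, say) is then a short tail analysis.

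You instead use the sharper observation that $|X\setminus S|\le 1$ or $|Y\setminus S|\le 1$ (forced by the $C_4$ in the join), fix the $X$-side of $S$ completely (all of $X$, or all but a guessed $x\in X\setminus W$), and reduce to a problem living entirely on $G[Y]$: plain {\sc Feedback Vertex Set} in the first type and {\sc Vertex Cover} in the second. Your branching is lighter ($O(n)$ guesses rather than $O(n^2)$), at the cost of a few extra boundary cases for connectivity that you handle correctly, including the $|X|=1$ case via Theorem~\ref{t-cvc}. The paper's dominating-edge trick buys a cleaner main case (connectivity is free once $u,v$ are in $S$, so no corner cases there), whereas your approach buys a more uniform reduction to the smaller graph $G[Y]$. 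Both are sound and polynomial.
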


\begin{proof}
Let $G=(V,E)$ be a $P_4$-free graph on~$n$ vertices and let~$W$ be a subset of~$V$.
By Remark~\ref{rem:1}, we may assume that~$G$ is connected.
By Lemma~\ref{l-p4}, in polynomial time we can find a spanning complete bipartite subgraph $G'=(X,Y,E')$, and we note that, by definition, every edge in~$G'$ is dominating.
Below, in Step~\ref{step1:1}, in polynomial time we compute a smallest connected feedback vertex set of~$G$ that contains~$W$ and intersects both~$X$ and~$Y$.
In Step~\ref{step1:2}, in polynomial time we compute a smallest connected feedback vertex set of~$G$ that contains $W$ and that is a subset of either~$X$ or~$Y$ (if such a set exists).
Then the smallest set found is a smallest connected feedback vertex set of~$G$ that contains~$W$.

\thmstep{\label{step1:1}Compute a smallest connected feedback vertex set~$S$ of $G$ such that $W \subseteq S$, $S\cap X \neq \emptyset$ and $S\cap Y\neq \emptyset$.}
We perform Step~\ref{step1:1} as follows. Consider two vertices $u \in X$ and $v \in Y$.
We shall describe how to find a smallest connected feedback vertex set of $G$ that contains $W \cup \{u,v\}$.
We find a smallest feedback vertex set~$S'$ in $G-(W\cup \{u,v\})$.
As $G-(W\cup \{u,v\})$ is $P_4$-free, this takes polynomial time by Proposition~\ref{p-oct}.
Then $S' \cup W\cup \{u,v\}$ is a smallest feedback vertex set of $G$ that contains $W \cup \{u,v\}$ and is connected, since~$uv$ is a dominating edge. By repeating this polynomial-time procedure for all~$O(n^2)$ possible choices of~$u$ and~$v$, we will find~$S$ in polynomial time.

\thmstep{\label{step1:2}Compute a smallest connected feedback vertex set~$S$ of $G$ such that $S \subseteq X$ or $S \subseteq Y$.}
For Step~\ref{step1:2} we describe only the $S \subseteq X$ case, as the $S\subseteq Y$ case is symmetric. Thus we may assume that $W \subseteq X$, otherwise no such set exists.
Clearly, we may also assume that~$G[Y]$ contains no cycles.
If~$G[Y]$ contains an edge it follows that $S=X$, otherwise $G-S$ would contain a triangle.
Suppose instead that~$Y$ is an independent set.
If $|Y|=1$, then $X \setminus S$ must be an independent set, otherwise $G-S$ contains a triangle.
So~$S$ is a smallest connected vertex cover of~$G[X]$ that contains~$W$.
As $G[X]$ is $P_4$-free, we can find such an~$S$ in polynomial time by Theorem~\ref{t-cvc}.
If $|Y|\geq 2$, then $|X\setminus S|\leq 1$, as otherwise $G-S$ contains a $4$-cycle.
Thus, we check, in polynomial time, if there exists a vertex $x\in X\setminus W$, such that $X\setminus \{x\}$ is connected.
If so, $S= X \setminus \{x\}$.\qed
\end{proof}

\begin{theorem}\label{t-cocte}
{\sc Connected Odd Cycle Transversal Extension} can be solved in polynomial time on $P_4$-free graphs.
\end{theorem}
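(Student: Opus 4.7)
The plan is to mirror the structure of the proof of Theorem~\ref{t-cfve}. By Remark~\ref{rem:1} I may assume $G$ is connected, and the case $|V(G)|=1$ is trivial, so suppose $|V(G)|\geq 2$. Using Lemma~\ref{l-p4} I would first compute in polynomial time a spanning complete bipartite subgraph of $G$ with partition classes $X$ and $Y$, both nonempty, so that every edge between $X$ and $Y$ in $G$ is dominating. A smallest connected odd cycle transversal $S$ of $G$ containing $W$ either (a) meets both $X$ and $Y$, or (b) lies entirely in one of $X,Y$; the algorithm outputs the smaller of the two candidates produced below.

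For (a) I would enumerate each pair $(u,v)\in X\times Y$ and, using Proposition~\ref{p-oct} on the $P_4$-free graph $G-(W\cup\{u,v\})$, compute a smallest odd cycle transversal $S'$ of this subgraph; then $S:=S'\cup W\cup\{u,v\}$ is an odd cycle transversal of $G$ containing $W\cup\{u,v\}$, and it is connected because the dominating edge $uv$ lies inside $S$, so every other vertex of $S$ is adjacent to $u$ or $v$. Taking the minimum over the $O(n^2)$ pairs handles (a).

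Case (b) is the heart of the proof; by symmetry I treat only $S\subseteq X$, which forces $W\subseteq X$. The key structural observation is that, because $X$ is join-complete to $Y$, if $X\setminus S\neq\emptyset$ then any edge $y_1y_2$ of $G[Y]$ together with any $x\in X\setminus S$ induces a triangle in $G-S$, and symmetrically any edge inside $G[X\setminus S]$ together with any $y\in Y$ induces a triangle. So $X\setminus S\neq\emptyset$ forces both $G[Y]$ and $G[X\setminus S]$ to be independent, and conversely if both are independent then $G-S$ is the complete bipartite graph between them and is therefore bipartite. This splits (b) into two polynomially tractable subproblems: the option $S=X$ is feasible precisely when $G[X]$ is connected and $G[Y]$ is bipartite (both checkable in polynomial time); the option $S\subsetneq X$ is feasible only when $G[Y]$ is an independent set, in which case the task reduces to finding a smallest connected vertex cover of the $P_4$-free graph $G[X]$ containing $W$, which is solved in polynomial time by Theorem~\ref{t-cvc}.

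The hard part will be getting the structural dichotomy in case (b) right: one must check that the triangle argument is exhaustive, so no other kind of connected odd cycle transversal inside $X$ is missed, and that, in the $S\subsetneq X$ subcase, any smallest connected vertex cover of $G[X]$ containing $W$ does in fact yield a bipartite $G-S$. Both directions rely on the complete join from Lemma~\ref{l-p4} collapsing $G-S$ into a complete bipartite graph whenever both remaining sides are independent; once this is verified, the polynomial-time bound follows immediately from Lemma~\ref{l-p4}, Proposition~\ref{p-oct}, and Theorem~\ref{t-cvc}.
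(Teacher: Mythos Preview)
Your proposal is correct and follows essentially the same approach as the paper: split into the case where $S$ meets both $X$ and $Y$ (handled via Proposition~\ref{p-oct} over all $O(n^2)$ dominating edges) and the case $S\subseteq X$ (symmetrically $S\subseteq Y$), where an edge in $G[Y]$ forces $S=X$ and otherwise the problem reduces to {\sc Connected Vertex Cover Extension} on $G[X]$ via Theorem~\ref{t-cvc}. Your treatment of case~(b) is in fact slightly more explicit than the paper's outline, since you spell out the feasibility checks for the option $S=X$ (connectivity of $G[X]$ and bipartiteness of $G[Y]$) that the paper leaves implicit.
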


\begin{proof}
We only provide an outline, as the proof follows that of Theorem~\ref{t-cfve}.
We perform the same two steps.
In Step~\ref{step1:1}, we need to find a smallest odd cycle transversal~$S'$ in $G-(W\cup \{u,v\})$ and can again apply Proposition~\ref{p-oct}.
In Step~\ref{step1:2}, we again note that if~$G[Y]$ contains an edge, then $S=X$.
Suppose that~$Y$ is an independent set. Then $G-S$ contains no odd cycles if and only if $X \setminus S$ is independent, so~$S$ is a smallest connected vertex cover of~$G[X]$ that contains~$W$.
(That is, the $|Y|=1$ case from the proof of Theorem~\ref{t-cfve} can be used for all values of~$|Y|$, as we are no longer concerned with whether $G-S$ might contain cycles of even length.)\qed
\end{proof}

\section{The Case $\mathbf{H=sP_1+\nobreak P_3}$}\label{s-p1p3}

\begin{sloppypar}
In this section, we will prove that {\sc Feedback Vertex Set} and {\sc Odd Cycle Transversal} and their connected variants can be solved in polynomial time on $(sP_1+\nobreak P_3)$-free graphs.
We need three structural results.
First, let us define a function~$c$ on the non-negative integers by $c(s):=\max\{3,2s-1\}$.
We will use this function~$c$ throughout the remainder of this section, starting with the following lemma.
\end{sloppypar}

\begin{lemma}\label{sp1p3-s}
Let $s \geq 0$ be an integer.
Let~$G$ be a bipartite $(sP_1+\nobreak P_3)$-free graph.
If~$G$ has a connected component on at least~$c(s)$ vertices, then there are at most~$s-1$ other connected components of~$G$ and each of them is on at most two vertices.
\end{lemma}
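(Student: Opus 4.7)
My plan is to prove both halves of the conclusion by contradiction, in each case by exhibiting an explicit induced $sP_1+P_3$ in~$G$. The cases $s\in\{0,1\}$ are essentially trivial: for $s=0$ the hypothesis $|C|\geq 3$ together with bipartiteness of~$C$ already forces an induced $P_3$ in $G$, contradicting $P_3$-freeness; for $s=1$, the bound on the number of other components will force that number to be zero, making the size bound vacuous. So the substantive argument is for $s\geq 2$.

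Fix a component $C$ with $|C|\geq c(s)$. The preliminary observation I would record is that $C$ contains an induced $P_3$: since $C$ is connected with $|C|\geq 3$, some vertex $v\in C$ has degree at least $2$ (otherwise $C$ would be a single vertex or a single edge), and any two neighbours of $v$ lie in the same bipartition class of~$C$ and are therefore non-adjacent, yielding an induced $P_3$ together with $v$.

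For the bound on the number of other components, suppose towards a contradiction that $G$ has $s$ other components $C_1,\dots,C_s$. Choose $v_i\in C_i$ for each $i$; then $v_1,\dots,v_s$ together with the induced $P_3$ in $C$ occupy pairwise distinct components of $G$, and so the corresponding $s+3$ vertices induce an $sP_1+P_3$, contradicting $(sP_1+P_3)$-freeness.

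For the bound on the size of each other component, suppose some other component $C'$ satisfies $|C'|\geq 3$. Then $C'$ also contains an induced $P_3$ by the same reasoning. Meanwhile $C$ is bipartite with $|C|\geq 2s-1$, so the larger of its two bipartition classes is an independent set of size at least $\lceil(2s-1)/2\rceil=s$. Any $s$ vertices of this independent set, together with the induced $P_3$ in $C'$, lie in different components and are pairwise non-adjacent, so they induce $sP_1+P_3$ in $G$, again a contradiction.

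I do not foresee any real obstacle: the whole proof is a routine forbidden-induced-subgraph argument, combined with the elementary fact that a bipartite graph on $n$ vertices contains an independent set of size at least $\lceil n/2\rceil$. The only care required is tracking the small-$s$ bookkeeping, which is why $c(s)$ is defined as a maximum rather than simply as $2s-1$.
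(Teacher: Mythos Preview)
Your proof is correct and follows essentially the same approach as the paper: find an induced $P_3$ in any bipartite connected component on at least three vertices, find an independent set of size $s$ in any bipartite graph on at least $2s-1$ vertices, and combine these (taking the $P_3$ from one component and the $sP_1$ from another) to derive both conclusions by contradiction. The paper handles the two parts in the opposite order and singles out only the case $s=0$ as trivial, but the arguments are otherwise identical.
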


\begin{proof}
First note that the $s=0$ case of the lemma is trivially true, as every connected component of a bipartite $P_3$-free graph has at most two vertices.

Suppose, for contradiction, that~$G$ has a connected component~$C_1$ on at least $c(s)$ vertices and a connected component~$C_2$ on at least three vertices.
As~$C_1$ is bipartite and contains at least $2s-1$ vertices, $C_1$ contains a independent set of~$s$ vertices that induce~$sP_1$.
As~$C_2$ is bipartite and contains at least three vertices, $C_2$ has a vertex~$v$ of degree at least~$2$, and so~$v$ and two of its neighbours induce a~$P_3$.
Thus~$G$ is not $(sP_1+\nobreak P_3)$-free, a contradiction.

Similarly, if~$G$ contains a connected component~$C_1$ on at least $c(s)\geq 3$ vertices, then this component contains an induced~$P_3$.
Since~$G$ is $(sP_1+\nobreak P_3)$-free, $G$ can contain at most~$s-1$ connected components other than~$C_1$. \qed
\end{proof}

The \emph{internal vertices} and \emph{leaves} of a tree are the vertices of degree at least~$2$ and degree~$1$, respectively.

\begin{lemma}\label{l-tree}
Let $s \geq 0$ be an integer.
Let~$T$ be an $(sP_1+\nobreak P_3)$-free tree.
Then~$T$ has at most~$4s$ internal vertices.
\end{lemma}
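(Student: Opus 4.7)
The plan is to proceed by induction on $s$. For the base case $s=0$, the tree $T$ is $P_3$-free, and since any vertex of degree at least $2$ together with two of its neighbours induces a $P_3$, $T$ has no internal vertex, matching the bound $4 \cdot 0 = 0$.

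For the inductive step, assume $s \geq 1$ and that the claim holds for $s-1$. First I would observe that the internal vertices of $T$ induce a subtree $T_I$ of $T$, because the path in $T$ between any two internal vertices cannot pass through a leaf (leaves have degree $1$). If $|V(T_I)| \leq 1$, the bound is trivial, so I may assume $T_I$ has a leaf $\ell$. Let $u$ be its unique neighbour in $T_I$, and let $L_\ell$ denote the set of leaves of $T$ adjacent to $\ell$; since $\ell$ has $T$-degree at least $2$ but only $u$ as an internal $T$-neighbour, $L_\ell$ is non-empty.

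The key step is to peel off the pendant star at $\ell$: set $T' := T - (\{\ell\} \cup L_\ell)$ and argue that $T'$ is $((s-1)P_1+P_3)$-free. Indeed, given any induced copy of $(s-1)P_1+P_3$ in $T'$, pick an arbitrary $\lambda \in L_\ell$; its only $T$-neighbour is $\ell \notin V(T')$, so $\lambda$ is non-adjacent to every vertex of the copy, yielding an induced $sP_1+P_3$ in $T$ and contradicting the hypothesis. The induction hypothesis then bounds the number of internal vertices of $T'$ by $4(s-1)$. To conclude, I would compare the internal vertices of $T$ and $T'$: the only vertex of $\{\ell\} \cup L_\ell$ that is internal in $T$ is $\ell$, while among vertices of $V(T')$ only $u$ sees a change in degree (it drops by $1$) and could therefore slip from internal in $T$ to a leaf of $T'$. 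All other vertices keep both their degree and their internal/leaf status, so $T$ has at most $4(s-1) + 2 = 4s-2 \leq 4s$ internal vertices. The one subtlety to watch out for is exactly this degree-bookkeeping at $u$; choosing a leaf of $T_I$ rather than an arbitrary leaf of $T$ to peel off is precisely what bounds the additive loss by the constant $2$ and makes the induction go through cleanly.
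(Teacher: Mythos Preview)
Your proof is correct and takes a genuinely different route from the paper's. The paper argues by contradiction: assuming more than $4s$ internal vertices, it works with the bipartition $\{X,Y\}$ of~$T$, uses a degree-sum inequality to force the number of internal vertices on one side of the bipartition down to $|U_Y|\leq 1$, and then exhibits an explicit induced $sP_1+P_3$ in each of the resulting sub-cases. Your argument instead inducts on~$s$: you peel off the pendant star at a leaf of the subtree~$T_I$ induced by the internal vertices, note that any leaf of~$T$ in that star is isolated from the remainder and hence upgrades an induced $(s-1)P_1+P_3$ in~$T'$ to an induced $sP_1+P_3$ in~$T$, and close the induction with a two-vertex bookkeeping at~$\ell$ and~$u$. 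Your approach is more elementary and, as a bonus, the recursion $f(s)\le f(s-1)+2$ with $f(0)=0$ actually yields the sharper bound~$2s$ if you strengthen the induction hypothesis accordingly; the paper itself remarks that its $4s$ bound is not tight. The paper's counting argument, by contrast, reveals more of the global structure of a hypothetical extremal tree (essentially a double star), which is not needed here but is the kind of information that can be useful elsewhere.
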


\begin{proof}
Let~$U$ be the set of internal vertices of~$T$.
Suppose that $|U|\geq 4s+1\geq 1$.
We will show that this leads to a contradiction.
As a path with at least $4s+\nobreak 1$ internal vertices contains an induced $sP_1+\nobreak P_3$, we may assume that~$T$ is not a path and so has at least three leaves.
Hence $|V(T)|\geq 4s+4$.

Let~$X$ and~$Y$ be the two bipartition sets of~$T$, and assume without loss of generality that $|X|\geq 2s+2$.
For $Z \in \{X,Y\}$, let~$L_Z$ and~$U_Z$ be the leaves and internal vertices of~$T$ that belong to~$Z$.
If there is a vertex in~$Y$ of degree at least~$2$ that is anti-complete to a set of~$s$ vertices of~$X$, then~$T$ contains an induced $sP_1+\nobreak P_3$, a contradiction.
Therefore we may assume that every vertex of~$Y$ either has degree at least $|X|-s+1$ or is in~$L_Y$.
Then
\begin{eqnarray*}
|X|+|U_Y|+|L_Y|-1&=&|X|+|Y|-1 \\
&=&|V(T)|-1\\
&=& |E(T)| \\
&=&\sum_{v\in Y} \deg(v)\\
&\geq& \sum_{v\in U_Y} (|X|-s+1)+|L_Y|\\
&=& (|X|-s+1)|U_Y|+|L_Y| \\
&=&|X||U_Y|-s|U_Y|+|U_Y|+|L_Y|.
\end{eqnarray*}
Thus we have $|X|-1 \geq |X||U_Y|-s|U_Y|$ and we rearrange to see that $$|U_Y|\leq \frac{|X|-1}{|X|-s}=1+\frac{s-1}{|X|-s}.$$
Since $|X|\geq 2s+2$, we have that $|U_Y|<2$.
First suppose  $|U_Y|=0$. Then $|U_X|\leq 1$ 
and $|L_X|=0$, 
or $|U_X|=0$ and $|L_X|\leq 1$. 
Both cases contradict the assumption that~$X$ has at least $2s+2$ vertices.
Now suppose $|U_Y|=1$. Then, by our assumption that $|U|\geq 4s+1$, we have that $|U_X|\geq 4s$ and so $|L_Y|\geq |U_X|\geq 4s$.
Now it is easy to find an induced $sP_1+\nobreak P_3$ (see \figurename~\ref{intree}), and this contradiction completes the proof.\qed
\end{proof}

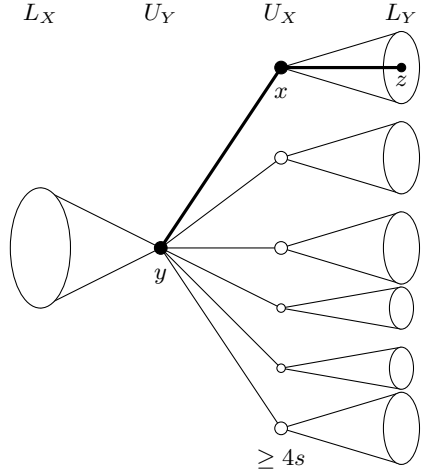
\begin{figure}
\begin{center}
\begin{tikzpicture}[scale=0.8]
\draw (-1,0)(1,3) -- (3,3.6) (1,3) -- (3,2.4) (-1,0) -- (1,1.5) -- (3,2.1) (1,1.5) -- (3,0.9) (-1,0) -- (1,0) -- (3,0.6) (1,0) -- (3,-0.6) (-1,0) -- (1,-3) -- (3,-3.6) (1,-3) -- (3,-2.4)
(-1,0) -- (1,-1) -- (3,-0.65) (1,-1) -- (3,-1.35) (-1,0) -- (1,-2) -- (3,-1.65) (1,-2) -- (3,-2.35) (-3,1) -- (-1,0) -- (-3,-1);
\draw [fill=white] (1,1.5) circle [radius=3pt] (1,0) circle [radius=3pt] (1,-1) circle [radius=2pt] (1,-2) circle [radius=2pt] (1,-3) circle [radius=3pt]
(-3,0) ellipse (0.5cm and 1cm) (3,3) ellipse (0.3cm and 0.6cm) (3,1.5) ellipse (0.3cm and 0.6cm) (3,0) ellipse (0.3cm and 0.6cm)
(3,-1) ellipse (0.2cm and 0.35cm) (3,-2) ellipse (0.2cm and 0.35cm) (3,-3) ellipse (0.3cm and 0.6cm);
\draw [fill=black] (-1,0) circle [radius=3pt] (1,3) circle [radius=3pt] (3,3) circle [radius=2pt];
\draw[very thick] (-1,0) -- (1,3) -- (3,3);
\node[below] at (-1,-0.2) {$y$};
\node[below] at (1,2.8) {$x$};
\node[below] at (3,3) {$z$};
\node[below] at (1,-3.2) {$\geq 4s$};
\node[above] at (-3,3.6) {$L_X$};
\node[above] at (-1,3.6) {$U_Y$};
\node[above] at (1,3.6) {$U_X$};
\node[above] at (3,3.6) {$L_Y$};
\end{tikzpicture}
\end{center}
\caption{The structure of the tree~$T$ in the proof of Lemma~\ref{l-tree} in the case when $|U_Y|=1$.
The set~$L_X$ is an independent set of vertices that each are adjacent to the unique vertex $y\in U_Y$.
The set $L_Y$ is partitioned into independent sets of vertices that have the same neighbour in~$U_X$.
The vertices $y,x,z$, together with~$s$ vertices of~$L_y$ not adjacent to~$x$, induced an $sP_1+\nobreak P_3$ in~$T$ 
(which leads to the desired contradiction in the proof).}
\label{intree}
\end{figure}

The bound of~$4s$ in Lemma~\ref{l-tree} is not tight but, as we shall see later, it suffices for our purposes.

\begin{lemma}  \label{l-rconnectsu}
Let $s \geq 0$ be an integer.  Let $G$ be a connected $(sP_1+P_3)$-free graph, and let $U$ be a set of vertices in $G$.  Then there is a set of vertices $R$ in $G$ such that $G[R \cup U]$ is connected and $|R| \leq 2s^2-2s+3$.
\end{lemma}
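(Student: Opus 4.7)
Proof plan. My strategy is to construct $R$ as a connected dominating set of $G$ of size at most $2s^2 - 2s + 3$; any such $R$ automatically makes $G[R \cup U]$ connected, since every vertex of $U \setminus R$ has a neighbour in $R \subseteq R \cup U$.

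If $G$ is $P_3$-free then $G$ is a clique and a single vertex suffices, so I may assume $G$ contains an induced $P_3$. Let $s^*$ be the largest integer in $\{0, 1, \ldots, s-1\}$ such that $G$ has an induced copy of $s^* P_1 + P_3$; this $s^*$ is well defined because $G$ has an induced $P_3$ but no induced $sP_1 + P_3$. Fix such a copy $F = \{a, b, c, x_1, \ldots, x_{s^*}\}$ with $abc$ being the $P_3$. The maximality of $s^*$ immediately forces $F$ to dominate $V(G)$: any undominated vertex $y$ would extend $F$ to an induced $(s^*+1) P_1 + P_3$, contradicting the choice of $s^*$.

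To join $F$ into a connected set, for each $i \in \{1, \ldots, s^*\}$ I take a shortest path $P_i$ in $G$ from $x_i$ to $\{a, b, c\}$, of length $d_i \geq 2$, and put
\[
R = \{a, b, c\} \cup \bigcup_{i=1}^{s^*} V(P_i).
\]
Then $R$ is connected (the $P_3$ $abc$ is connected and each $x_i$ attaches to it through $P_i$) and $R$ contains the dominating set $F$, hence is a connected dominating set.

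The crucial quantitative step is the bound $d_i \leq 2s-1$. Writing $P_i$ as $x_i = u_0, u_1, \ldots, u_{d_i}$ with $u_{d_i} \in \{a,b,c\}$, the path is induced (being shortest) and by the shortest-path property no $u_j$ with $j \leq d_i - 2$ is adjacent to any vertex of $\{a, b, c\}$; otherwise the path $u_0, u_1, \ldots, u_j, v$ would shorten $P_i$. Consequently, if $d_i \geq 2s$, then the $s$ vertices $u_0, u_2, u_4, \ldots, u_{2s-2}$ have indices at most $2s - 2 \leq d_i - 2$ and pairwise differ by at least $2$ on the induced path, so they are pairwise non-adjacent and jointly non-adjacent to $\{a, b, c\}$; together with $abc$ they would induce an $sP_1 + P_3$ in $G$, contradicting the hypothesis. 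Since $|V(P_i) \setminus \{a, b, c\}| \leq d_i \leq 2s - 1$ and $s^* \leq s - 1$, I conclude that
\[
|R| \leq 3 + \sum_{i=1}^{s^*} d_i \leq 3 + s^*(2s-1) \leq 3 + (s-1)(2s) = 2s^2 - 2s + 3.
\]

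The only non-routine ingredient is extracting an induced $sP_1 + P_3$ from alternate vertices of a long shortest path to $\{a,b,c\}$; the rest — existence of $s^*$, domination via maximality, connectivity of $R$ and the counting — is standard.
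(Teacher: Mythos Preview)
Your proof is correct. The approach differs from the paper's in a notable way: the paper's $R$ is tailored to the given set $U$ --- it takes an induced $P_3$ and observes that at most $s-1$ connected components of $G[U]$ can fail to touch its neighbourhood, then links each such component to the $P_3$ by a shortest path with at most $2s$ internal vertices. Your argument instead ignores $U$ entirely and produces a connected \emph{dominating} set of size at most $2s^2-2s+3$, by taking a maximal induced $s^*P_1+P_3$ (which must dominate by maximality) and wiring the isolated vertices $x_i$ back to the $P_3$ via shortest paths of length at most $2s-1$. This yields a slightly stronger conclusion --- a single $R$ works for every $U$ simultaneously --- at no extra cost. The underlying mechanism (short induced paths forced by the forbidden $sP_1+P_3$) is the same in both proofs; the difference is in what object the paths are used to assemble.
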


\begin{proof}
If~$G[U]$ is connected, then let $R = \emptyset$.  
Otherwise, since~$G$ cannot now be a complete graph, it contains an induced path~$P$ on three vertices in~$G$.
The number of connected  components of~$G[U]$ that do not contain a vertex that is either in~$P$ or adjacent to a vertex of~$P$ in~$G$ is at most $s-1$, otherwise~$G$ contains an induced~$sP_1+\nobreak P_3$.
Let~$R$ contain the vertices of~$P$ and the internal vertices of shortest paths in~$G$ from~$P$ to each set of vertices that induces a connected component of~$G[U]$.
As at most $s-1$ of these shortest paths have more than zero internal vertices, and as each contains at most~$2s$ internal vertices (any longer path contains an induced $sP_1+\nobreak P_3$), it follows that $|R| \leq 3+2s(s-1)=2s^2-2s+3$.
As~$G[R\cup U]$ is connected, the lemma is proved.
\qed
\end{proof}

We now prove our four results. For the connected variants, we consider the more general extension versions.

\begin{theorem}\label{sp1p3w-fvsi}
For every $s\geq 0$, {\sc Feedback Vertex Set} can be solved in polynomial time on $(sP_1+\nobreak P_3)$-free graphs.
\end{theorem}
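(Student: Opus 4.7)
My plan is to analyze the structure of the induced forest $F = G - S$ for a minimum feedback vertex set $S$, enumerate a small ``skeleton'' of $F$, and reduce the remainder to Max Independent Set on an induced $(sP_1+\nobreak P_3)$-free subgraph, which is polynomial-time solvable by Theorem~\ref{sp1p6-vc}. Since $F$ is bipartite and $(sP_1+\nobreak P_3)$-free (being an induced subgraph of $G$), I would split into two cases according to whether some component of $F$ has at least $3$ vertices.

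\emph{Case~B (some component of~$F$ has $\geq 3$ vertices).} A counting argument (a $P_3$ from one such component plus singletons from $s$ others would yield an induced $sP_1+\nobreak P_3$) shows that $F$ has at most $s$ components. If some component has $\geq c(s)$ vertices, Lemma~\ref{sp1p3-s} forces the remaining components to have $\leq 2$ vertices each, and Lemma~\ref{l-tree} bounds the total number of internal vertices of $F$ by $4s$. I would enumerate all sets $K\subseteq V$ of size at most $6s-2$ (intended as the union of internal vertices and the vertices of the small components) in $n^{O(s)}$ time. For each such $K$ for which $G[K]$ is a forest, the candidate leaves are the vertices $v \in V\setminus K$ with exactly one neighbour in $K$ (necessarily an internal vertex), and the selected leaves must be pairwise nonadjacent in $G$ (otherwise they would each have degree $\geq 2$ in $F$); maximizing the number of leaves is then Max Independent Set on the induced subgraph of candidate leaves, solvable by Theorem~\ref{sp1p6-vc}. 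If no component of $F$ has $\geq c(s)$ vertices, then $|V(F)|\leq s(c(s)-1)=O(s^2)$, so I would enumerate $V(F)$ directly in $n^{O(s^2)}$ time.

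\emph{Case~A (every component of~$F$ has at most $2$ vertices).} Then $F$ is a matching $M$ plus isolated vertices (a dissociation set). If $|M|\leq s$, I would enumerate $V(M)$ (of size $\leq 2s$) in $n^{O(s)}$ time; after fixing $M$, the isolated vertices of $F$ form a Max Independent Set anti-complete to $V(M)$, computed in $G-(V(M)\cup N_G(V(M)))$ by Theorem~\ref{sp1p6-vc}. If $|M|>s$, then taking one endpoint of each of any $s$ edges of $M$ yields an independent set $I^*\subseteq V(F)$ of size $s$, and by $(sP_1+\nobreak P_3)$-freeness every induced $P_3$ of $G$ meets $N_G[I^*]$, so $G[V\setminus N_G[I^*]]$ is a disjoint union of cliques. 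The dissociation constraint then forces every $v\in V(F)\cap N_G(I^*)$ to have a unique neighbour $u\in I^*$ and to be matched to $u$ in $F$; since each $u\in I^*$ has at most one matched partner, $|V(F)\cap N_G[I^*]|\leq 2s$, and the remainder of $V(F)$ lies in the cluster graph $G[V\setminus N_G[I^*]]$ avoiding the neighbourhoods of the partners, where max dissociation is trivial (at most two vertices per clique). I would enumerate the $\leq 2s$ vertices of $V(F)\cap N_G[I^*]$ in $n^{O(s)}$ time and solve the cluster-graph residue for each.

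The main obstacle I anticipate is the $|M|>s$ subcase of Case~A: one must verify precisely that fixing the size-$s$ independent subset $I^*$ of $V\setminus S$ together with its $\leq s$ matched partners really forces the remainder of $V\setminus S$ into the easy cluster-graph subproblem, and check that the enumeration is complete. The overarching insight is uniform: $(sP_1+\nobreak P_3)$-freeness confines the ``complicated'' part of any induced forest to an $O(s^2)$-sized core, and Theorem~\ref{sp1p6-vc} extends any such core to an optimum through a Max Independent Set call on the residual graph.
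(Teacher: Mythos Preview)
Your approach is correct but structurally different from the paper's. The paper proceeds by \emph{induction on~$s$}: it splits into only two cases, according to whether every component of~$F$ has at least~$c(s)$ vertices (then Lemma~\ref{sp1p3-s} forces~$F$ to be a single tree, and one enumerates its at most~$4s$ internal vertices and runs one Max~IS call for the leaves) or some component~$T$ has fewer than~$c(s)$ vertices (then one enumerates~$T$, deletes $T\cup N(T)$, observes that the remainder is $((s{-}1)P_1+P_3)$-free, and recurses). Your Cases~B2, A1 and~A2 are therefore all absorbed into this single inductive step, and the paper never needs your dissociation-set analysis. Your non-inductive route is nonetheless sound; in particular your Case~A2 cluster-graph argument (guess~$s$ matching edges~$K$ and solve greedily on the $P_3$-free graph $G-N_G[K]$) is essentially the device the paper deploys later, in Step~\ref{step3:3} of Theorem~\ref{sp1p3-cfvsi}, to handle the \emph{connected} variant where the inductive reduction is not available. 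The trade-off: the inductive proof is markedly shorter and cleaner; yours is self-contained but pays for it with the extra Case~A machinery. One small point to tidy in Case~B1: you should enumerate the internal-vertex tree and the small components as two separate sets (as in Step~\ref{step3:1} of Theorem~\ref{sp1p3-cfvsi}) and require candidate leaves to have their unique neighbour in the former and no neighbour in the latter; a vertex whose sole neighbour in~$K$ lies in a size-$\leq 2$ component is not ``necessarily'' attached to an internal vertex.
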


\begin{proof}
Let $s \geq 0$ be an integer, and let $G=(V,E)$ be an $(sP_1+\nobreak P_3)$-free graph.
We must show how to find a smallest feedback vertex set of~$G$.
We will in fact show how to find a largest induced forest of~$G$, the complement of a smallest feedback vertex set.
The proof is by induction on~$s$.
If $s=0$, then we can use Proposition~\ref{p-oct}.
We now assume that $s\geq 1$ and that we have a polynomial-time algorithm for finding a largest induced forest in $((s-\nobreak 1)P_1+\nobreak P_3)$-free graphs.
Our algorithm performs the following two steps in polynomial time. Together, these two steps cover all possibilities.

\thmstep{\label{step2:1}Compute a largest induced forest~$F$ such that every connected component of~$F$ has at least~$c(s)$ vertices.}
By Lemma~\ref{sp1p3-s} we know that~$F$ will be connected, and so by Lemma~\ref{l-tree} $F$ will be a tree with at most~$4s$ internal vertices.
We consider every possible choice~$U$ of a non-empty set of at most~$4s$ vertices.
There are~$O(n^{4s})$ choices.
If~$U$ induces a tree, we will find a largest induced tree whose internal vertices all belong to~$U$.
This can be found by adding to~$U$ the largest possible set of vertices that are independent and belong to the set~$R$ of vertices in $G-U$ that each have exactly one neighbour in~$U$.
That is, we need a largest independent set in~$G[R]$ and, by Theorem~\ref{sp1p6-vc}, such a set can be found in polynomial time.

\thmstep{\label{step2:2}Compute a largest induced forest~$F$ such that~$F$ has a connected component with at most $c(s)-1$ vertices.}
We consider every possible choice of a non-empty set~$T$ of at most $c(s)-1$ vertices and discard those that do not induce a tree.
There are $O(n^{c(s)-1})$ choices for~$T$.
Let $U=N(T)$, and let $G'=G-(T\cup U)$.
Then~$G'$ is $((s-\nobreak 1)P_1+\nobreak P_3)$-free.
Thus we can find a largest induced forest~$F'$ of~$G'$ in polynomial time and $F' +\nobreak G[T]$ is a largest induced forest of~$G$ among those that have~$G[T]$ as a connected component.
\qed
\end{proof}

\begin{theorem}\label{sp1p3-cfvsi}
For every $s\geq 0$, {\sc Connected Feedback Vertex Set Extension} can be solved in polynomial time on $(sP_1+\nobreak P_3)$-free graphs.
\end{theorem}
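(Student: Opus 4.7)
The plan is to prove the theorem by induction on~$s$, extending the inductive scheme of Theorem~\ref{sp1p3w-fvsi} with additional bookkeeping for connectivity. The base case $s=0$ is immediate since every $P_3$-free graph is also $P_4$-free, so Theorem~\ref{t-cfve} applies. For the inductive step ($s\ge 1$), by Remark~\ref{rem:1} I may assume that $G$ is connected and instead search for a maximum-size induced forest $F$ of $G$ with $V(F)\cap W=\emptyset$ and with $V(G)\setminus V(F)$ connected in $G$; the minimum connected feedback vertex set $S$ containing $W$ is then $V(G)\setminus V(F)$ for the best such $F$. Following the proof of Theorem~\ref{sp1p3w-fvsi}, I split according to whether $F$ has a \emph{big} component of size at least $c(s)$ or only \emph{small} ones.

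In the big-component case, Lemmas~\ref{sp1p3-s} and~\ref{l-tree} together pin down an optimal $F$ as a single tree with at most $4s$ internal vertices plus at most $s-1$ further components of size at most two. I will enumerate the internal vertex set $U$ of the big component ($O(n^{4s})$ guesses), the small components ($O(n^{2(s-1)})$ guesses), and, for each $u\in U$ of degree at most one in $G[U]$, up to two candidate leaves adjacent to $u$ so as to certify that $u$ is internal in the chosen tree. The remaining leaves are then selected by solving a maximum independent set problem on an induced $(sP_1+P_3)$-free subgraph of $G$; this is polynomial because $(sP_1+P_3)$-freeness implies $(sP_1+P_6)$-freeness, so Theorem~\ref{sp1p6-vc} applies (via the independent-set/vertex-cover complementary duality). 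For each candidate $F$ so obtained, connectivity of $V(G)\setminus V(F)$ is then checked directly.

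In the small-components case, I enumerate one component $T$ of $F$ (a tree disjoint from $W$ with at most $c(s)-1$ vertices) in $O(n^{c(s)-1})$ ways and set $U:=N_G(T)$, every vertex of which must lie in $S$. In any optimal solution $G[S]$ is connected and $(sP_1+P_3)$-free, so Lemma~\ref{l-rconnectsu} applied inside $G[S]$ yields a ``connectivity witness'' $R\subseteq S\setminus U$ with $|R|\le 2s^2-2s+3$ such that $G[U\cup R]$ is connected; accordingly I enumerate all such $R\subseteq V(G)\setminus(T\cup U)$, giving $O(n^{2s^2-2s+3})$ further choices. The residual graph $G':=G-(T\cup U\cup R)$ is $((s-1)P_1+P_3)$-free by the same argument as in Theorem~\ref{sp1p3w-fvsi}---any vertex of $T$ is anti-complete to $V(G')$---so the inductive hypothesis becomes available on $G'$.

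The technical crux, and the step I expect to require the most care, is solving the residual task on $G'$: find the smallest $X\subseteq V(G')$ containing $W\cap V(G')$ with $G'-X$ a forest and with $X\cup U\cup R$ connected \emph{in}~$G$ (not merely in $G'$). Since $G[U\cup R]$ is already connected, what remains is to guarantee that every connected component of $G'[X]$ contains a vertex adjacent in $G$ to $U\cup R$. My plan is to handle this by an additional polynomial enumeration of attachment vertices, one per component, thereby reducing the sub-task to a genuine Connected FVS Extension instance on the $((s-1)P_1+P_3)$-free graph $G'$ to which the inductive hypothesis applies; a cleaner alternative, which I may pursue instead, is to strengthen the inductive statement to permit an external anchor set so that the recursion is self-contained. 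Either way, the interplay between the induction on~$s$ and the global connectivity constraint carried through $U\cup R$ is where I expect the main obstacle to lie.
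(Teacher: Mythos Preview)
Your proposal has two genuine gaps, and in both places the paper takes a different (and simpler) route.

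\textbf{Big-component case.} After guessing $U$ and the small components, you find the remaining leaves by computing one maximum independent set in the relevant subgraph and then checking whether $V(G)\setminus V(F)$ is connected. This is not enough: the quantity you need is the \emph{largest} independent set $L\subseteq R$ subject to the side constraint that $G-(U\cup U'\cup L)$ is connected, and this maximum can be strictly smaller than the unconstrained maximum independent set. A single call to Theorem~\ref{sp1p6-vc} followed by a connectivity test will therefore miss optimal solutions. The paper handles this step by invoking Theorem~\ref{t-cvc} ({\sc Connected Vertex Cover Extension}) on $G-(U\cup U')$ with the prescribed set $V\setminus(R\cup U\cup U')$: the complement of the returned connected vertex cover is exactly the largest independent $L\subseteq R$ with connected complement. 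Your extra enumeration of ``candidate leaves'' for low-degree vertices of $U$ does not address this issue.

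\textbf{Small-components case.} You correctly identify the crux---after removing $T\cup U\cup R$, the connectivity requirement on $S$ is relative to $G$ (through $U\cup R$), not to $G'$, so a plain recursive call to {\sc Connected Feedback Vertex Set Extension} on $G'$ does not suffice---but neither of your proposed fixes is worked out. ``Enumerating one attachment vertex per component of $G'[X]$'' fails because the number of such components is not bounded by any function of~$s$; and strengthening the inductive statement to allow an external anchor would require re-proving everything for that stronger statement. The paper sidesteps the whole difficulty by \emph{not} using induction in the connected version. Instead it splits the small-components case into two subcases: when $F$ has at most $s-1$ small components, $|V(F)|$ is bounded by a constant and brute force works; when $F$ has at least $s$ small components, the paper guesses $s$ of them at once as a set $L$ (cost $O(n^{s(c(s)-1)})$), sets $U=N(L)$, guesses $R$ via Lemma~\ref{l-rconnectsu}, and then observes that $G-(L\cup U\cup R\cup W)$ is anti-complete to an induced $sP_1$ inside $L$ and is therefore $P_3$-free, i.e.\ a disjoint union of cliques. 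From each clique one keeps at most two vertices in the forest and puts the rest (with a neighbour in $U\cup R\cup W$) into $S$, which is immediately optimal and connected. No recursion is needed.
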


\begin{proof}
There are similarities to the proof of Theorem~\ref{sp1p3w-fvsi}, but more arguments are needed.
Let $s \geq 0$ be an integer, let $G=(V,E)$ be a connected
 $(sP_1+\nobreak P_3)$-free graph and let~$W$ be a subset of~$V$.
We must show how to find a smallest connected feedback vertex set of~$G$ that contains~$W$ in polynomial time.
We show how to solve the complementary problem in polynomial time: how to find a largest induced forest~$F$ of~$G$ that does not include any vertex of~$W$ and $V \setminus F$ is connected.
We will say that an induced forest~$F$ is \emph{good} if it has these two properties.

Our algorithm performs the following three steps in polynomial time. Together, these three steps cover all possibilities.

\thmstep{\label{step3:1}Compute a largest good induced forest~$F$ such that there is a connected component of~$F$ that has at least~$c(s)$ vertices.}
By Lemma~\ref{sp1p3-s} we know that~$F$ has exactly one connected component on at least~$c(s)$ and there are at most $s-1$ other connected components of~$F$, each on at most two vertices.
By Lemma~\ref{l-tree}, the connected component on at least~$c(s)$ vertices has at most~$4s$ internal vertices.
We consider~$O(n^{4s+2(s-1)})$ choices of a non-empty set~$U$ of at most~$4s$ vertices that induces a tree and a set~$U'$ of at most $2(s-1)$ vertices that induces a disjoint union of vertices and edges such that $U \cup U'$ does not intersect~$W$,
 $U$ is disjoint from~$U'$ and no vertex of~$U$ has a neighbour in~$U'$.
Let~$R$ be the set of vertices that each have exactly one neighbour in~$U$ and no neighbour in~$U'$, but do not belong to~$W$.
We then add to~$U\cup U'$ the largest possible set~$L$ of vertices that are independent and belong to the set~$R$ such that $G-(L\cup U \cup U')$ is connected.
This is achieved by taking the complement of the smallest connected vertex cover of~$G-(U \cup U')$ that contains $V \setminus (R \cup U \cup U')$.
By Theorem~\ref{t-cvc}, this can be done in polynomial time.

\thmstep{\label{step3:2}Compute a largest good induced forest~$F$ such that~$F$ has at most $s-1$ connected components and each connected component has at most $c(s)-1$ vertices.}
Since the number of vertices in~$F$ is bounded by the constant $(s-1)(c(s)-1)$, we can simply check all sets containing at most that many vertices to see if they induce such a good forest.

\thmstep{\label{step3:3}Compute a largest good induced forest~$F$ such that~$F$ has at least~$s$ connected components and each connected component has at most $c(s)-1$ vertices.}
We consider $O(n^{s(c(s)-1)})$ choices of a non-empty set~$L$ of at most $s(c(s)-1)$ vertices.
We reject~$L$ unless~$G[L]$ is a good induced forest on~$s$ connected components with no connected component of more than $c(s)-1$ vertices.
Assuming our choice of~$L$ is correct, the connected components of~$G[L]$ will become connected components of~$G[F]$.

Let $U=N(L)$ and note that no vertex of~$U$ is in~$F$.
If $G-U$ is a good forest, then we are done.
Otherwise we consider every set~$R$ of at most $2s^2-2s+3$ vertices of $G - (L \cup U \cup W)$ such that $G[R \cup U \cup W]$ is connected; see also \figurename~\ref{f-step3}.
We note that if there is a largest induced forest~$F$ such that the connected components of~$G[L]$ are also connected components of~$G[F]$, then Lemma~\ref{l-rconnectsu} applied to $G-F$ implies that such a set~$R$ exists.

Let $S=R \cup U \cup W$.
If $G-S$ is a forest, then we are done.
Otherwise note that $G-(L \cup S)$ is the disjoint union of one or more complete graphs: $G-(L \cup S)$ cannot contain an induced~$P_3$, as it is anti-complete to~$L$ which contains an induced~$sP_1$.

As~$G$ is connected, each of the complete graphs in $G-(L\cup S)$ contains at least one vertex that is adjacent to some vertex of~$S$.
Hence in polynomial time we can find a set~$S'$ of vertices containing all but $\min\{2,|X|\}$ vertices from each of the complete graphs~$X$ in such a way that $G[S \cup S']$ is connected.
Then $G-(S \cup S')$ is a largest good induced forest that contains~$L$ 
and no vertex of $R\cup U$. 

\begin{figure}
\begin{center}
\begin{tikzpicture}[xscale=0.5, yscale=0.5]
\draw 
(-3.5,-8) rectangle (-2.5,-5)
(-3.4,-4.5) rectangle (-2.6,-2.5)
(-3.4,-2) rectangle (-2.6,-0)
(-3.5,0.5) rectangle (-2.5,3.5)
(-3.5,4.5) rectangle (-2.5,7.5)
(5,-7) rectangle (9,6.5)
(-2.5,5.7) -- (2,1)
(-2.5,2) -- (2,0)
(-2.6,-1) -- (2,-1)
(-2.6,-3.5) -- (2,-2)
(-2.5,-6) -- (2,-3)
(-4,8) -- (-4.5,8) -- (-4.5,-8.5) -- (-4,-8.5);
\draw[fill=gray!30!white] (2,-2) ellipse (2cm and 6cm);
\draw[fill=gray] (2.5,5) ellipse (1cm and 2cm);
\node[below] at (7,-7) {$G-(L\cup U\cup W)$};
\node[above] at (2.5,7) {$W$};
\node[below] at (-3,-8) {$L$};
\node[below] at (2,-8) {$U=N(L)$};
\node[left] at (-5,0) {$s$};
\end{tikzpicture}
\end{center}
\caption{The decomposition of the $(sP_1+\nobreak P_3)$-free graph~$G$, as given in Step~\ref{step3:3} of the algorithm from the proof of Theorem~\ref{sp1p3-cfvsi}.}
\label{f-step3}
\end{figure}
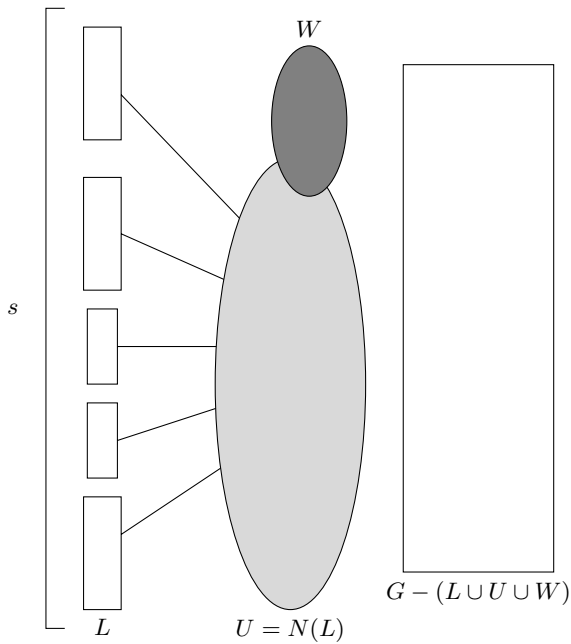

\noindent
After considering each of the $O(n^{2s^2-2s+3})$ choices for~$R$, in polynomial time we find a largest good induced forest that contains~$L$ and no vertex of $U$.
After considering each of the $O(n^{s(c(s)-1)})$ choices for~$L$, we find in polynomial time a largest good induced forest that has at least~$s$ connected components, each with at most $c(s)-1$ vertices.
\qed
\end{proof}

\newpage
\begin{theorem}\label{sp1p3w-octi}
For every $s\geq 0$, {\sc Odd Cycle Transversal} can be solved in polynomial time on $(sP_1+\nobreak P_3)$-free graphs.
\end{theorem}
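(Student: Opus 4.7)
The plan is to follow the two-step scheme of the proof of Theorem~\ref{sp1p3w-fvsi} and proceed by induction on~$s$. The base case $s=0$ is immediate from Proposition~\ref{p-oct}, since every $P_3$-free graph is $P_4$-free. For the inductive step, let $G$ be a $(sP_1+\nobreak P_3)$-free graph and consider a largest induced bipartite subgraph $B^*$ of $G$ (its complement in $V(G)$ is then a smallest odd cycle transversal). We shall distinguish two cases: either $B^*$ has a connected component on at least three vertices---which, being connected and bipartite, contains an induced $P_3$---or every component of $B^*$ has at most two vertices, and in particular at most $c(s)-1$ of them. Our algorithm will run a procedure for each case and output the better of the two solutions.

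For Step~1 we shall enumerate all $O(n^3)$ induced $P_3$s $uvw$ in $G$ (with $v$ the middle vertex). Fix such a $P_3$ and consider any induced bipartite subgraph $B$ of $G$ containing $\{u,v,w\}$: its bipartition $(X,Y)$ is forced by $u,w\in X$ and $v\in Y$. Partition $V(G)\setminus\{u,v,w\}$ by adjacency to $\{u,v,w\}$ into classes $A_0,A_u,A_v,A_w,A_{uv},A_{uw},A_{vw},A_{uvw}$, where $A_0$ is the set of vertices anti-complete to $\{u,v,w\}$. Any vertex of $A_{uv}\cup A_{vw}\cup A_{uvw}$ would have to lie in both $X$ and $Y$, so it cannot belong to $V(B)$; every vertex of $A_v$ that belongs to $V(B)$ is forced into $X$; every vertex of $A_u\cup A_w\cup A_{uw}$ that belongs to $V(B)$ is forced into $Y$; and a vertex of $A_0$ is free to go in $X$, in $Y$, or outside $V(B)$. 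The key observation is that, because $uvw$ is an induced $P_3$ and $G$ is $(sP_1+\nobreak P_3)$-free, the set $A_0$ has independence number at most $s-1$. Therefore we enumerate, in $O(n^{2(s-1)})$ time, all pairs $(A_0^X,A_0^Y)$ of disjoint independent subsets of $A_0$ of size at most $s-1$ each. For each such pair, the remaining contribution to $X$ is a maximum independent set in $G[A_v\setminus N(A_0^X)]$ and the remaining contribution to $Y$ is a maximum independent set in $G[(A_u\cup A_w\cup A_{uw})\setminus N(A_0^Y)]$; both can be computed in polynomial time by Theorem~\ref{sp1p6-vc}, since $(sP_1+\nobreak P_3)$-free graphs are $(sP_1+\nobreak P_6)$-free. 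Taking the maximum of $|X|+|Y|$ over all choices yields a largest induced bipartite subgraph of $G$ that contains an induced~$P_3$.

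For Step~2 we shall mirror exactly the second step in the proof of Theorem~\ref{sp1p3w-fvsi}. We enumerate in $O(n^{c(s)-1})$ time every non-empty set $T\subseteq V(G)$ of at most $c(s)-1$ vertices such that $G[T]$ is connected and bipartite. For each such $T$, the graph $G':=G-(T\cup N(T))$ is $((s-\nobreak 1)P_1+\nobreak P_3)$-free, because any induced $(s-\nobreak 1)P_1+\nobreak P_3$ in $G'$ together with any vertex of $T$ (which is anti-complete to $V(G')$) would yield an induced $sP_1+\nobreak P_3$ in $G$. By the inductive hypothesis, a largest induced bipartite subgraph $B'$ of $G'$ can be found in polynomial time, and $G[T]+B'$ is then an induced bipartite subgraph of $G$ in which $G[T]$ is one of the components. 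We keep the maximum over all choices of $T$.

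The algorithm returns the better of the two candidates. Correctness will follow from the dichotomy above: if $B^*$ contains an induced $P_3$, then Step~1 applied with that $P_3$ recovers a solution of size $|V(B^*)|$; otherwise every component of $B^*$ has at most two vertices, and Step~2 applied with $T$ being any such component succeeds. The main obstacle, compared with Theorem~\ref{sp1p3w-fvsi}, is that a largest connected bipartite induced subgraph is no longer a tree, so Lemma~\ref{l-tree} cannot be used to bound the number of ``inner'' vertices; our remedy is the $P_3$-enumeration above, which exploits the bound on the independence number of $A_0$ to enumerate the free part of the bipartition in polynomial time.
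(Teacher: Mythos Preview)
Your proof is correct and follows the same two-step inductive scheme as the paper, with Step~2 identical to the paper's. The difference lies in Step~1. The paper restricts Step~1 to the case where \emph{every} component of $B$ has at least $c(s)$ vertices, invokes Lemma~\ref{sp1p3-s} to conclude that $B$ is connected, then anchors on an edge $xy$ with both endpoints of degree at least~$2$ in $B$ (treating stars separately); the $(sP_1+P_3)$-freeness then bounds the number of vertices of $X$ non-adjacent to $y$ and of $Y$ non-adjacent to $x$ by $s-1$ each, and these are enumerated before two independent-set calls. You instead anchor directly on an induced $P_3$, which immediately bounds $\alpha(G[A_0])\le s-1$ and forces the side of every vertex adjacent to $\{u,v,w\}$; this lets your Step~1 handle any $B$ containing an induced $P_3$ (not just connected ones), so you avoid both the star case and the appeal to Lemma~\ref{sp1p3-s}. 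One minor imprecision: saying ``the bipartition $(X,Y)$ is forced'' is literally true only on the component of $B$ containing $u,v,w$; vertices in other components of $B$ land in $A_0$, and your enumeration of $(A_0^X,A_0^Y)$ correctly handles their placement, so the argument goes through. Both approaches reduce to the same subroutine (Theorem~\ref{sp1p6-vc}) and have comparable running times; your $P_3$-anchor variant is a clean alternative that trades an extra factor of $n$ in the enumeration for a slightly shorter case analysis.
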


\begin{proof}
Let $s \geq 0$ be an integer, and let $G=(V,E)$ be an $(sP_1+\nobreak P_3)$-free graph.
We must describe how to find a smallest odd cycle transversal of~$G$.
If $s=0$, then we can use Proposition~\ref{p-oct}.
We now assume that $s\geq 1$ and use induction.
We will in fact describe how to solve the complementary problem and find a largest induced bipartite subgraph of~$G$.
The proof is by induction on~$s$ and our algorithm performs two steps in polynomial time, which together cover all possibilities.

\thmstep{\label{step4:1}Compute a largest induced bipartite subgraph~$B$ such that every connected component of~$B$ has at least~$c(s)$ vertices.}
By Lemma~\ref{sp1p3-s}, we know that~$B$ will be connected. Hence, $B$ has a unique bipartition, which we denote~$\{X,Y\}$.
We first find a largest induced bipartite subgraph~$B$ that is a star: we consider each vertex~$x$ and find a largest induced star centred at~$x$ by finding a largest independent set in~$N(x)$.
This can be done in polynomial time by Theorem~\ref{sp1p6-vc}.

Next, we find a largest induced bipartite subgraph~$B$ that is not a star.
We consider each of the~$O(n^2)$ choices of edges~$xy$ of~$G$ and find a largest induced connected bipartite subgraph~$B$ such that $x \in X$ and $y \in Y$ and neither~$x$ nor~$y$ has degree~$1$ in~$B$ (since~$B$ is not a star, it must contain such a pair of vertices).
Note that the number of vertices in~$X$ non-adjacent to~$y$ is at most $s-1$, otherwise~$B$ induces an $sP_1+\nobreak P_3$.
Similarly there are at most $s-1$ vertices in~$Y$ non-adjacent to~$x$.
We consider each of the~$O(n^{2s-2})$ possible pairs of disjoint sets~$X'$ and~$Y'$, which are each independent sets of size at most $s-1$ such that $X'\cup Y'$ is anti-complete to~$\{x,y\}$.
We will find a largest induced bipartite subgraph with partition classes~$X$ and~$Y$ such that $\{x\} \cup X' \subseteq X$ and $\{y\} \cup Y' \subseteq Y$ and every vertex in $X \setminus X'$ is adjacent to~$y$ and every vertex in $Y \setminus Y'$ is adjacent to~$x$.
That is, we must find a largest independent set in both $N(x) \setminus N(\{y\} \cup Y')$ and $N(y) \setminus N(\{x\} \cup X')$; see \figurename~\ref{f-il} for an illustration.
This can be done in polynomial time, again by applying Theorem~\ref{sp1p6-vc}.

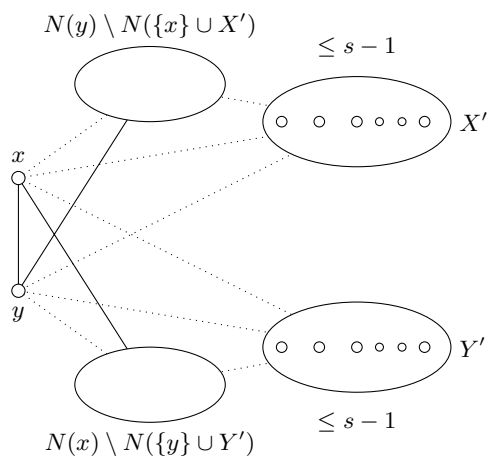
\begin{figure}
\begin{center}
\begin{tikzpicture}[xscale=0.5, yscale=0.5]
\draw (-1,-4) -- (-4.5,1.5) -- (-4.5,-1.5) -- (-1,4);
\draw[dotted] (4.5,3) -- (-4.5,1.5) -- (4.5,-3) (-4.5,1.5) -- (-1,4) -- (4.5,3)
(4.5,3) -- (-4.5,-1.5) -- (4.5,-3) (-4.5,-1.5) -- (-1,-4) -- (4.5,-3);
\draw[fill=white] 
(-4.5,1.5) circle [radius=5pt]
(-4.5,-1.5) circle [radius=5pt]
(2.5,3) circle [radius=4pt]
(3.5,3) circle [radius=4pt]
(4.5,3) circle [radius=4pt]
(5.1,3) circle [radius=3pt]
(5.7,3) circle [radius=3pt]
(6.3,3) circle [radius=4pt]
(2.5,-3) circle [radius=4pt]
(3.5,-3) circle [radius=4pt]
(4.5,-3) circle [radius=4pt]
(5.1,-3) circle [radius=3pt]
(5.7,-3) circle [radius=3pt]
(6.3,-3) circle [radius=4pt]
(-1,4) ellipse (2cm and 1cm)
(-1,-4) ellipse (2cm and 1cm)
(4.5,3) ellipse (2.5cm and 1.2cm)
(4.5,-3) ellipse (2.5cm and 1.2cm);
\node[above] at (4.5,4.5) {$\leq s-1$};
\node[below] at (4.5,-4.5) {$\leq s-1$};
\node[above] at (-1,5) {$N(y)\setminus N(\{x\}\cup X')$};
\node[below] at (-1,-5) {$N(x)\setminus N(\{y\}\cup Y')$};
\node[right] at (7,3) {$X'$};
\node[right] at (7,-3) {$Y'$};
\node[above] at (-4.5,1.7) {$x$};
\node[below] at (-4.5,-1.7) {$y$};
\end{tikzpicture}
\end{center}
\caption{An illustration of Step~\ref{step4:1} of the algorithm in the proof of Theorem~\ref{sp1p3w-octi}. Full and dotted lines indicate when two sets are complete or anti-complete to each other, respectively. The absence of a full or dotted lines indicates that edges may or may not exist between two sets.}
\label{f-il}
\end{figure}

\thmstep{\label{step4:2}Compute a largest induced bipartite subgraph~$B$ such that~$B$ has a connected component with at most $c(s)-1$ vertices.}
We consider each of the~$O(n^{c(s)-1})$ possible choices of a non-empty set~$L$ of at most $c(s)-1$ vertices and discard those that do not induce a bipartite graph.
We will find the largest~$B$ that has~$G[L]$ as a connected component.
Let $U=N(L)$, and let $G'=G-(L\cup U)$.
As~$G'$ is $((s-\nobreak 1)P_1+\nobreak P_3)$-free, we can find a largest induced bipartite subgraph~$B'$ of~$G'$ in polynomial time and $B' + G[L]$ is a largest induced bipartite subgraph among those that have~$G[L]$ as a connected component.
\qed
\end{proof}

\begin{theorem}\label{sp1p3-cocti}
For every $s\geq 0$, {\sc Connected Odd Cycle Transversal Extension} can be solved in polynomial time on $(sP_1+\nobreak P_3)$-free graphs.
\end{theorem}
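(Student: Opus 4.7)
The proof closely parallels that of Theorem~\ref{sp1p3-cfvsi}, translating the arguments from the feedback vertex (forest) setting to the odd cycle transversal (bipartite) setting, while borrowing the skeleton enumeration from the proof of Theorem~\ref{sp1p3w-octi}. By Remark~\ref{rem:1} we may assume $G$ is connected, and we solve the complementary problem of finding a largest \emph{good} induced bipartite subgraph $B$ of $G$, meaning that $B \cap W = \emptyset$ and $V \setminus B$ is connected. Lemma~\ref{sp1p3-s} splits the analysis into three cases mirroring the three steps of the proof of Theorem~\ref{sp1p3-cfvsi}, each handled in polynomial time; the best among the three outputs is the answer.

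Step~1 treats the case where some component of $B$ has at least $c(s)$ vertices. By Lemma~\ref{sp1p3-s}, this component is unique and the remaining components (at most $s-1$ of them) have at most two vertices each, so we enumerate a set $U'$ of at most $2(s-1)$ vertices encoding them. We then enumerate the skeleton of the large bipartite component as in Step~1 of Theorem~\ref{sp1p3w-octi}: either a star centred at some $x$, or an edge $xy$ together with the exceptional independent sets $X' \subseteq X \setminus N(y)$ and $Y' \subseteq Y \setminus N(x)$ of sizes at most $s-1$ each. For each enumerated skeleton $K$, we need to choose independent sets $X'' \subseteq R_1 := N(y) \setminus N(\{x\} \cup X')$ and $Y'' \subseteq R_2 := N(x) \setminus N(\{y\} \cup Y')$, each disjoint from $W \cup U' \cup N(U')$, that together maximize $|X''|+|Y''|$ subject to $V \setminus B$ being connected. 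The key technical ingredient is a reduction of this constrained maximization to a {\sc Connected Vertex Cover Extension} instance on an appropriately modified subgraph of $G - K$, with required set given by $W$ together with the vertices of $V \setminus K$ lying outside the candidate regions $R_1 \cup R_2$; Theorem~\ref{t-cvc} then solves this instance in polynomial time.

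Steps~2 and 3 handle the case where every component of $B$ has fewer than $c(s)$ vertices. Step~2 is the subcase of at most $s-1$ components, where $|B| \leq (s-1)(c(s)-1)$ is bounded by a constant and we brute-force enumerate. Step~3 is the subcase of at least $s$ components, following Step~3.3 of the proof of Theorem~\ref{sp1p3-cfvsi}: enumerate a set $L$ of at most $s(c(s)-1)$ vertices inducing exactly $s$ components of $B$, set $U = N(L)$, and use Lemma~\ref{l-rconnectsu} to enumerate a connector $R$ of constant size so that $S := R \cup U \cup W$ is connected. As in the proof of Theorem~\ref{sp1p3-cfvsi}, the graph $G - (L \cup S)$ is $P_3$-free (since an induced $P_3$ combined with $s$ pairwise-independent vertices chosen one per component of $L$ would produce an induced $sP_1 + P_3$ in $G$) and is therefore a disjoint union of cliques; from each clique $X$ at most $\min\{2,|X|\}$ vertices may remain in $B$ (because $K_3$ is non-bipartite), with the rest added to $S$ greedily while preserving connectivity, exactly as in Theorem~\ref{sp1p3-cfvsi}.

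The principal technical obstacle lies in Step~1: the bipartite structure of the large component of $B$ naturally gives rise to two candidate independent sets $X''$ and $Y''$ whose interaction through the bipartite edges between $R_1$ and $R_2$ is permissible in $B$ but does not fit directly into a standard vertex cover formulation, since such edges must remain uncovered. Reconciling this freedom with the global connectivity requirement on $V \setminus B$, while staying within polynomial time and a single invocation of Theorem~\ref{t-cvc}, is the crux of the argument and is what distinguishes the connected extension setting from its simpler non-extension counterpart in Theorem~\ref{sp1p3w-octi}.
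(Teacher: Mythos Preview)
Your Steps~2 and~3 are essentially correct and align with the paper's handling of the small-component cases. The genuine gap is in Step~1. You correctly identify the obstacle --- maximizing $|X''|+|Y''|$ where $X''\subseteq R_1$ and $Y''\subseteq R_2$ are each independent, edges between $R_1$ and $R_2$ are permitted in $B$, and $V\setminus B$ must be connected --- but you do not resolve it; you merely label it ``the crux of the argument''. A single call to {\sc Connected Vertex Cover Extension} on an ``appropriately modified subgraph'' does not obviously work: deleting the $R_1$--$R_2$ edges so that a vertex cover corresponds to independence on each side destroys connectivity information (an optimal $B$ may have $V\setminus B$ connected in $G$ only via such an edge), while keeping them forces the cover to contain an endpoint of every $R_1$--$R_2$ edge, which is strictly too restrictive. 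So the reduction you gesture at is not supplied and is not straightforward.

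The paper avoids this obstacle by a genuinely different case split, based not on component sizes but on the sizes of the bipartition classes of $B$. If one class has at most $s$ vertices it is guessed in full and the other class is obtained via a single CVCE call (your star subcase is an instance of this). If both classes have at least $s+1$ vertices, the paper guesses $X',Y'$ of size $s+1$ each; the payoff is that the candidate regions $V_X$ (anti-complete to $Y'$) and $V_Y$ (anti-complete to $X'$) are then $P_3$-free, hence disjoint unions of cliques, since an induced $P_3$ there together with $s$ vertices from $Y'$ or $X'$ would yield $sP_1+P_3$. With this clique structure the completion is greedy --- pick one vertex per clique while maintaining connectivity, with one further layer of branching on a cross edge to handle the few ``bad'' cliques --- and no CVCE reduction is needed in this case. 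Your skeleton from Theorem~\ref{sp1p3w-octi}, with exceptional sets of size at most $s-1$, is too small to force this structure, which is precisely why your Step~1 runs into the two-sided independence problem.
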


\begin{proof}
Let $s \geq 0$ be an integer, let $G=(V,E)$ be a connected $(sP_1+\nobreak P_3)$-free graph and let~$W$ be a subset of~$V$.
We must describe how to find a smallest connected odd cycle transversal of~$G$ that contains~$W$.
We will solve the complementary problem: how to find a largest induced bipartite graph of~$G$ that does not include any vertex of~$W$ and whose complement is connected.
We will say that an induced bipartite graph~$B$ is \emph{good} if it has these two properties.
Our algorithm consists of three steps, which can each be performed in polynomial time and which together cover all the possible cases.
  
\thmstep{\label{step5:1}Compute a largest good induced bipartite subgraph~$B$ such that~$B$ has a bipartition $\{X,Y\}$ in which one set, say~$X$, has size $|X| \leq s$.
(Note that this includes the case when every connected component of~$B$ has at most two vertices and~$B$ has at most~$s$ connected components.)}
We consider $O(n^s)$ choices of an independent set~$X$ of at most~$s$ vertices of~$G$ that does not intersect~$W$.
We wish to find $Y$, the largest possible independent set in $G - (W \cup X)$ such that $G-(X \cup Y)$ is connected.
By Theorem~\ref{t-cvc}, we can do this in polynomial time by computing a minimum connected vertex cover of $G-X$ that contains~$W$ and taking its complement (in $G-X$).

\thmstep{\label{step5:4}Compute a largest good induced bipartite subgraph~$B$ such that~$B$ has at least~$s$ connected components and each connected component has at most two vertices.}
Note that $2 \leq c(s)-1$.
The algorithm mimics Step~\ref{step3:3} of the algorithm in the proof of Theorem~\ref{sp1p3-cfvsi}, but checks for a good bipartite graph instead of a good forest.

\thmstep{\label{step5:2}Compute a largest good induced bipartite subgraph~$B$ such that there is a connected component of~$B$ that has at least three vertices and~$B$ has a bipartition $\{X,Y\}$ with $|X| \geq s+1$ and $|Y|\geq s+1$.}
It is in this case that we must do most of the work in proving the theorem, and here we will need ideas beyond those already met in this section.  

As~$B$ contains a connected component on at least three vertices, it will contain an induced~$P_3$ and so $|X| \geq 1$ and $|Y| \geq 1$.
We consider $O(n^{2s+2})$ choices of disjoint independent sets~$X'$ and~$Y'$ that each contain $s+1$ vertices of~$G$ and do not intersect~$W$.
If $G[X' \cup Y']$ contains an induced $P_3$, our aim is to compute a largest good induced bipartite graph~$B$ with bipartition~$\{X,Y\}$ such that $X'\subseteq X$ and $Y' \subseteq Y$; otherwise we discard the choice of $X',Y'$.
 
We define (see also \figurename~\ref{fig}) a partition of $V \setminus (X' \cup Y')$:

\begin{gather*}
U = (N(X')\cap N(Y')) \cup W \\
V_X = N(X') \setminus (Y' \cup N(Y')\cup W) \\
V_{Y} = N(Y') \setminus (X' \cup N(X')\cup W) \\
Z = V \setminus (X' \cup Y' \cup N(X') \cup N(Y')\cup W)
\end{gather*}

There are a number of steps where our procedure branches as we consider all possible ways of choosing whether or not to add certain vertices to~$B$.
Note that assuming our choice of~$X'$ and~$Y'$ is correct, no vertex of~$U$ can be in~$B$.
If we decide that a vertex will not be in~$B$, we will then add it to~$U$.

\begin{figure}
\begin{center}
\begin{tikzpicture}[xscale=0.5, yscale=0.5]
\draw (-8.5,2.5) -- (-9,2.5) -- (-9,-3.5) -- (-8.5,-3.5);
\draw (-5.67,-0) -- (-2.72,2) (-1.72,-1.61) -- (-1.72,1.61) (-5.55,-1.62) -- (-2.72,-2) (-0.36,-1.57) -- (3.1,2.2) (-0.36,1.57) -- (3.1,-2.2);
\draw[dotted] (1,2.5) -- (2.55,2.8) (1,-2.5) -- (2.55,-2.8) (0.8,2) -- (8.5,0.5) (0.8,-2) -- (8.5,-0.5);
\draw[fill=white] (4,2.75) circle [radius=15pt] (5,3.5) circle [radius=12pt] (5.3,2.5) circle [radius=8pt] (6.5,3) circle [radius=15pt] (3.2,-3) circle [radius=8pt]
(4.1,-3.5) circle [radius=15pt] (4.5,-2.5) circle [radius=12pt] (5.3,-3.7) circle [radius=7pt] (5.7,-2.7) circle [radius=17pt] (6.5,-3.5) circle [radius=10pt];
\draw (-1,2.5) ellipse (2cm and 1cm);
\draw (-1,-2.5) ellipse (2cm and 1cm);
\draw (5,3) ellipse (2.5cm and 1.2cm);
\draw (5,-3) ellipse (2.5cm and 1.2cm);
\draw (-7,-1) ellipse (1.5cm and 2.2cm);
\draw[fill=gray] (-7,1.5) ellipse (0.5cm and 1cm);
\draw (10,0) ellipse (1.5cm and 3cm);
\node[left] at (-9,0) {$U$};
\node[above] at (-1,3.5) {$X'$};
\node[below] at (-1,-3.5) {$Y'$};
\node[above] at (4,4) {$V_Y$};
\node[below] at (4,-4) {$V_X$};
\node[right] at (11.5,0) {$Z$};
\node[above] at (-7,2.5) {$W$};
\node[below] at (-7,-3.5) {$N(X')\cap N(Y')$};
\end{tikzpicture}
\end{center}
\caption{The decomposition of~$G$ in Step~\ref{step5:2}.
Full and dotted lines indicate when two sets are complete or anti-complete to each other, respectively.
The absence of a full or dotted lines indicates that edges may or may not exist between two sets.
The circles in~$V_X$ and~$V_Y$ represent disjoint unions of complete graphs.}
\label{fig}
\end{figure}
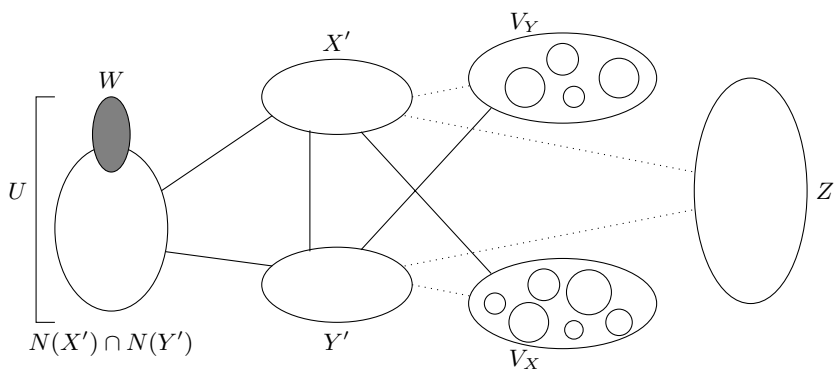

\thmsubstep{Reduce~$Z$ to the empty set.}
Notice that~$Z$ does not contain an independent set on more than $s-1$ vertices otherwise $G[X' \cup Y' \cup Z]$ would contain an induced $sP_1+\nobreak P_3$.
We consider $O(n^{2s-2})$ choices of disjoint independent sets~$Z_X$ and~$Z_Y$ that are each subsets of~$Z$ and each contain at most $s-1$ vertices.
We move the vertices of~$Z_X$ and~$Z_Y$ by adding them to~$X'$ and~$Y'$, respectively.
We move the vertices of $Z \setminus (Z_X \cup Z_Y)$ by adding them to~$U$.
If after this process is complete there are vertices in $V_X \cup V_Y$ with neighbours in both~$X'$ and~$Y'$, we move these vertices by adding them to~$U$.
We note that now:
\begin{itemize}
\item $Z$ is the empty set,
\item $V_X$ still contains vertices with neighbours in~$X'$ but not in~$Y'$,
\item $V_Y$ still contains vertices with neighbours in~$Y'$ but not in~$X'$, and
\item $U$ contains vertices that will not be in~$B$.
\end{itemize}
So our task is to decide how best to add vertices of~$V_X$ to~$Y'$ and vertices of~$V_Y$ to~$X'$, but first there is another step: as $G-B$ must be connected, and~$G[U]$ is a subgraph of~$G-B$, we choose some vertices that will not be in~$B$, but will connect together the connected components of~$G[U]$.
This will not be possible if the vertices of~$U$ belong to more than one connected component of $G-(X' \cup Y')$.
Hence, in that case we discard this choice of $Z_X,Z_Y$.

\thmsubstep{Make $G[U]$ connected.}
We consider $O(n^{2s^2-2s+3})$ choices of sets~$R$ of vertices of $G-(X' \cup Y')$ such that each contains at most $2s^2-2s+3$ vertices.
If $G[R \cup U]$ is connected, we move the vertices of~$R$ by adding them to~$U$, and so~$G[U]$ becomes connected.
Note that since all vertices of~$U$ are in the same connected component of $G-(X' \cup Y')$, Lemma~\ref{l-rconnectsu} implies that at least one such set~$R$ can be found.

\thmsubstep{Add vertices from $V_X$ to $Y'$ and from $V_Y$ to $X'$.}
We note that~$G[V_X]$ is $P_3$-free, as no vertex of~$V_X$ has a neighbour in~$Y'$, $|Y'|\geq s$, and~$G$ is $(sP_1+\nobreak P_3)$-free.
By symmetry, $G[V_Y]$ is $P_3$-free.
Thus both~$G[V_X]$ and~$G[V_Y]$ are disjoint unions of complete graphs.
Note that~$B$ can contain at most one vertex from each of these complete graphs.
We consider two subcases.

\thmsubsubstep{Compute a largest good induced bipartite subgraph~$B$ with bipartition $\{X,Y\}$ such that $X' \subseteq X$, $Y' \subseteq Y$ and $G-B$ contains no edges between~$V_X$ and~$V_Y$.}
As $G-B$ must be connected, each clique of~$V_X$ and~$V_Y$ that contains at least two vertices must contain a vertex adjacent to~$U$ (otherwise such a set~$B$ cannot exist).
Thus we can form~$X$ from~$X'$ by adding to~$X'$ one vertex from each clique of~$V_Y$ and form~$Y$ by adding to~$Y'$ one vertex from each clique of~$V_X$ in such a way that~$G-B$ is connected.
(If we do this, it is possible that $G-B$ will contain an edge from~$V_X$ to~$V_Y$, but then this solution is at least as large as one where such edges are avoided.)

\thmsubsubstep{Compute a largest good induced bipartite subgraph~$B$ with bipartition $\{X,Y\}$ such that $X' \subseteq X$, $Y' \subseteq Y$ and $G-B$ has an edge $xy$ where $x \in V_X$, $y \in V_Y$.}
We consider $O(n^2)$ choices of an edge $xy$, $x \in V_X$, $y \in V_Y$.
Let $v_X \in X'$ be a neighbour of~$x$ and note that $v_X$, $x$ and $y$ induce a $P_3$ in~$G$.
Therefore~$x$ must be complete to all but at most $s-1$ cliques of~$V_Y$.
By symmetry, $y$ must be complete to all but at most $s-1$ cliques of~$V_X$.
A clique in~$V_X$ or~$V_Y$ is \emph{bad} if it is not complete to~$y$ or~$x$, respectively.
Note that the cliques containing~$x$ and~$y$ may be bad.
We move~$x$ and~$y$ to~$U$.

We consider $O(n^{2s-2})$ choices of a set~$S$ of at most $2s-2$ vertices that each belong to a distinct bad clique and move each to~$X'$ or~$Y'$ if they are in~$V_Y$ or~$V_X$ respectively.
We move the other vertices of the bad cliques to~$U$.
If the vertices of~$U$ are not in the same connected component of $G-(X' \cup Y')$, we discard this choice of~$S$.
We consider $O(n^{2s^2-2s+3})$ choices of sets~$R'$ of vertices of $G-(X' \cup Y')$ such that each contains at most $2s^2-2s+3$ vertices.
If $G[R' \cup U]$ is connected we move the vertices of~$R'$ to~$U$, so~$G[U]$ becomes connected.
Since the vertices of~$U$ are in the same connected component of $G-(X' \cup Y')$, Lemma~\ref{l-rconnectsu} implies that at least one such set~$R'$ can be found.

Note that some cliques might have been completely removed from~$V_X$ and~$V_Y$ by the choice of~$R'$.
It only remains to pick one vertex from each remaining clique of~$V_X$ and~$V_Y$, and add these vertices to~$Y'$ or~$X'$, respectively to finally obtain~$B$.
As all vertices in these cliques are adjacent to~$x$ or~$y$ we know that $G-B$ will be connected.
\qed
\end{proof}

\section{The Case $\mathbf{H=P_6}$}\label{s-hard}

In this section we prove that {\sc Odd Cycle Transversal} and {\sc Connected Odd Cycle Transversal} are \NP-hard on 
 $(P_2+\nobreak P_5,P_6)$-free graphs. We do this by modifying the construction used in~\cite{OR19} for proving that these two problems are \NP-complete on  $P_{13}$-free segment graphs. 

\begin{theorem}\label{thm:oct-P6-P2+P5}
{\sc Odd Cycle Transversal} and {\sc Connected Odd Cycle Transversal} are \NP-complete on $(P_2+\nobreak P_5,P_6)$-free graphs.
\end{theorem}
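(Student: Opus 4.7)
The plan is to give a polynomial-time reduction, adapting the construction of Okrasa and Rz\k{a}\.{z}ewski~\cite{OR19} (which establishes hardness of both problems on $P_{13}$-free segment graphs) so that the produced graph is $(P_2+P_5,P_6)$-free. Membership in NP is immediate for both problems: a candidate transversal $S$ is verified in polynomial time by checking $|S|\leq k$, bipartiteness of $G-S$ (by two-colouring), and, for the connected variant, connectivity of $G[S]$.

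For the reduction, the source will be the same 3-SAT-like instance used in~\cite{OR19}, built around variable gadgets (each a short odd cycle with two distinguished vertices $x_i^+, x_i^-$, exactly one of which must enter every optimal transversal) and clause gadgets (each an odd cycle through the three corresponding literal vertices, forcing at least one literal into the transversal). In~\cite{OR19} these gadgets are linked by long connecting paths, which is what allows induced paths of length up to~$12$ to appear. My modification will replace these paths by a small \emph{hub}: one or a few extra vertices adjacent to every gadget vertex, themselves forced into every optimal transversal by attaching to them a small private odd cycle that cannot be hit more cheaply. Because hub vertices are near-universal they cannot sit in the interior of any induced $P_k$ for $k\geq 4$, so an induced $P_6$ would have to avoid the hub entirely; choosing the gadget interiors to have bounded diameter then rules out any $P_6$ inside a single gadget or across two of them. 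Correctness of the reduction is routine: satisfying assignments yield transversals of the target size, and any transversal of the target size decodes back into a satisfying assignment after accounting for the forced hub vertices.

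The main obstacle will be $(P_2+P_5)$-freeness. Because $P_2+P_5$ is disconnected, its $P_2$ component may live in one gadget and its $P_5$ component in another, provided there is no edge between them; ruling out induced $P_6$'s does not automatically prevent this. The plan is to arrange, via the hub together with direct complete bipartite joins between the interiors of distinct gadgets, that every edge of the construction has a neighbour on every induced $P_5$ of the graph, so that any edge plus any $P_5$ lie in a single connected induced subgraph. For the connected variant, the hub vertices are in every optimal transversal and dominate all other transversal vertices, so an optimal OCT is automatically connected (after possibly adding a bounded number of further forced hub-adjacent vertices), and $k$ is adjusted accordingly. The verification that both forbidden structures vanish will reduce to a bounded case analysis over where the endpoints of a putative $P_6$, or the two components of a putative $P_2+P_5$, can sit relative to the hub and the gadget boundaries.
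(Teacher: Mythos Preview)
Your plan has a genuine gap at the point where you propose to add ``direct complete bipartite joins between the interiors of distinct gadgets'' in order to kill induced $P_2+P_5$. Any such join, combined with an edge already inside a gadget (and your variable and clause gadgets are odd cycles, so they contain many edges), produces new triangles and longer odd cycles spanning two gadgets. These fresh odd cycles must all be hit by the transversal, but nothing in your intended encoding of a satisfying assignment guarantees this, and hitting them may well force extra vertices and blow the budget. You write that ``correctness of the reduction is routine'', but once the joins are present it is not: you have not explained why the added density introduces no odd cycles beyond those already controlled by the gadget structure. The same issue arises with a near-universal hub: the hub together with any gadget edge forms a triangle, so the hub is forced into $S$, but that alone does not make the remainder of the argument go through once further joins are layered on top.

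The paper resolves exactly this tension, and in a way rather different from your sketch. It reduces from {\sc Vertex Cover}, not 3-{\sc Sat}, and builds a specific backbone: an independent set $Y$ complete to $X\cup B$, where $X\cup B$ induces only the perfect matching $\{x_ib_i\}$. All remaining vertices (the sets $A,C,D$) have degree~$2$ with adjacent neighbours, so they can only appear as endpoints of an induced path; hence any induced $P_6$ or $P_2+P_5$ would force an induced $P_4$ or $P_1+P_3$ inside $X\cup B\cup Y$, which is impossible since $Y$ is complete to $X\cup B$. Crucially, every odd cycle created by this dense backbone is a triangle through some matching edge $x_ib_i$, and the intended transversal always contains $x_i$ or $b_i$; the edge-gadget and vertex-gadget triangles are handled by the vertex-cover condition. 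That structural idea---make the dense connector a complete bipartite graph over a perfect matching, so that every new odd cycle passes through an edge you already control---is precisely what your proposal is missing.
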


\begin{proof}
Both problems are readily seen to belong to \NP.
To prove \NP-hardness we reduce from {\sc Vertex Cover}, which is known to be \NP-complete~\cite{GJS76}.
Let $(G,k)$ be an instance of {\sc Vertex Cover}.
Let~$n$ and~$m$ be the number of vertices and edges, respectively, in~$G$.
Let $v_1,\ldots,v_n$ be the vertices of~$G$.
We construct a graph~$G^*$ from~$G$ as follows.
\begin{enumerate}
\item For $i \in \{1,\ldots,n\}$ create vertices $a_i,b_i,c_i,x_i$ and~$y_i$.
Let $A, B, C, X$ and $Y$ be the sets of, respectively, $a_i$, $b_i$, $c_i$, $x_i$ and $y_i$ vertices. 
\item For $i,j \in \{1,\ldots,n\}$, add the edges~$x_iy_j$ and~$b_iy_j$ (so we make $Y$ complete to both $X$ and $B$).
\item For each $i \in \{1,\ldots,n\}$, add edges $x_ia_i,x_ib_i,a_ib_i,b_ic_i,c_iy_i$ (a \emph{vertex gadget}, see also \figurename~\ref{fig:oct-vertex-gadget} and note that~$b_i$ is adjacent to~$y_i$ by the previous step).
\item For each edge~$v_iv_j$ in~$G$ with $i<j$, add a vertex~$d_{i,j}$ adjacent to both~$x_i$ and~$y_j$ (an \emph{edge gadget}, see also \figurename~\ref{fig:oct-edge-gadget}).
Let~$D$ be the set of~$d_{i,j}$ vertices.
\end{enumerate}

\begin{figure}[h]
\begin{center}
\begin{subfigure}[t]{0.3\textwidth}
\begin{center}
\begin{tikzpicture}
\node[circle, draw,label=below:$x_i$] (x) at (0,0) {};
\node[circle,draw,label=below:$y_i$] (y) at (2,0) {};
\node[circle,draw,label=above:$a_i$] (a) at (0,1) {};
\node[circle,draw,label=above:$b_i$] (b) at (1,1) {};
\node[circle,draw,label=above:$c_i$] (c) at (2,1) {};
\draw (x) -- (y);
\draw (x) -- (a) -- (b) -- (c) -- (y);
\draw (x) -- (b) -- (y);
\end{tikzpicture}
\caption{Vertex gadget}
\label{fig:oct-vertex-gadget}
\end{center}
\end{subfigure}
\begin{subfigure}[t]{0.3\textwidth}
\begin{center}
\begin{tikzpicture}
\node[circle,draw,label=below:$x_i$] (x) at (0,0) {};
\node[circle,draw,label=below:$y_j$] (y) at (2,0) {};
\node[circle,draw,label=above:$d_{i,j}$] (e) at (1,1) {};
\draw (x) -- (y) -- (e) -- (x);
\end{tikzpicture}
\caption{Edge gadget}
\label{fig:oct-edge-gadget}
\end{center}
\end{subfigure}
\end{center}
\caption{The two gadgets used in the proof of Theorem~\ref{thm:oct-P6-P2+P5}.}
\end{figure}
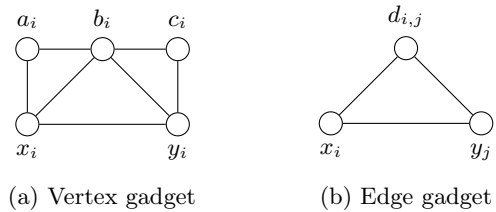

We first claim that the following statements are equivalent:
\begin{enumerate}[(i)]
\renewcommand{\theenumi}{(\roman{enumi})}
\renewcommand{\labelenumi}{(\roman{enumi})}
\item\label{it:i}$G$ has a vertex cover of size at most~$k$;
\item\label{it:ii}$G^*$ has an odd cycle transversal of size at most $n+\nobreak k$;
\item\label{it:iii}$G^*$ has a connected odd cycle transversal of size at most $n+\nobreak k$.
\end{enumerate}
The implication~\ref{it:iii} $\Rightarrow$~\ref{it:ii} is trivial. Below we prove~\ref{it:i} $\Rightarrow$~\ref{it:iii} and~\ref{it:ii} $\Rightarrow$~\ref{it:i}.

\medskip
\noindent
\ref{it:i}~$\Rightarrow$~\ref{it:iii}.
Suppose that $G$ has a vertex cover~$Q$ of size at most~$k$. We define the set
$$S = \bigcup_{v_i \in Q} \{x_i,y_i\} \cup \bigcup_{v_i \notin Q} \{b_i\}$$
and observe that $|S|=2|Q|+(n-|Q|)=n+|Q| \leq n+k$ and that $S$ is connected. We claim that $S$ is an odd cycle transversal of $G^*$. This can be seen as follows.
The only induced odd cycles in~$G^*$ are the three triangles in each vertex gadget and the triangle in each edge gadget.
By construction of~$S$, for every $i\in \{1,\ldots,n\}$, either~$S$ contains both~$x_i$ and~$y_i$ or~$S$ contains~$b_i$, thus every triangle in every vertex gadget intersects~$S$.
Furthermore, since~$Q$ is a vertex cover of~$G$, for every edge gadget $\{x_i,y_j,d_{i,j}\}$, either $x_i \in S$ or $y_j \in S$.
Therefore~$S$ intersects every odd cycle in~$G^*$.

\medskip
\noindent
\ref{it:ii}~$\Rightarrow$~\ref{it:i}.
Suppose that~$G^*$ has an odd cycle transversal~$S$ of size at most~$n+k$.
Consider an edge gadget on $\{x_i,y_j,d_{i,j}\}$.
If $d_{i,j} \in S$ then $S':=(S \setminus \{d_{i,j}\}) \cup \{x_i\}$ is an odd cycle transversal of~$G$ with $|S'| \leq |S|$.
We may therefore assume that~$S$ contains no vertices of~$D$.
For $i \in \{1,\ldots,n\}$, the vertex~$b_i$ intersects all odd cycles in the vertex gadget on $\{a_i,b_i,c_i,x_i,y_i\}$.
If $b_i \notin S$ then $|S \cap \{a_i,b_i,c_i,x_i,y_i\}| \geq 2$ since~$S$ intersects all induced odd cycles of the vertex gadget.
Note that~$\{x_i,y_i\}$ intersects all odd cycles of the vertex gadget.
Therefore, if $|S \cap \{a_i,b_i,c_i,x_i,y_i\}| \geq 2$, then $S':=(S \setminus \{a_i,b_i,c_i\}) \cup \{x_i,y_i\}$ is an odd cycle transversal of~$G^*$ with $|S'| \leq |S|$.
We may therefore assume that for every $i \in \{1,\ldots,n\}$, either $b_i \in S$ or $\{x_i,y_i\} \subseteq S$ and there are no other vertices in~$S$. Let $B_S=B\cap S$, $X_S=S\cap X$ and $Y_S= S\cap Y$. Then $|S|=|B_S|+|S_X|+|S_Y|=n+|S_X|$.
Let $Q=\bigcup_{x_i \in S}\{v_i\}$. Then $|Q|=|S_X|=|S|-n\leq n+k-n = k$.

We claim that $Q$ is a vertex cover of $G$. This can be seen as follows.
Consider an edge~$v_iv_j$ of~$G$ (without loss of generality assume $i<j$). 
Then $|\{x_i,y_j,d_{i,j}\} \cap S| \geq 1$, as~$S$ is an odd cycle transversal of~$G^*$.
By assumption on~$S$, $d_{i,j} \notin S$ and if $y_j \in S$ then $x_j \in S$.
It follows that $x_i \in S$ or $x_j \in S$ and so $v_i \in Q$ or $v_j \in Q$.
We conclude that~$Q$ is a vertex cover of~$G$ of size at most~$k$.

\medskip
\noindent
It only remains to show that~$G^*$ is $(P_2+\nobreak P_5,P_6)$-free.
Suppose, for contradiction, that $H \in \{P_2+\nobreak P_5,P_6\}$ is an induced subgraph of~$G^*$.
Every vertex in $A \cup C \cup D$ has degree~$2$ and its two neighbours are adjacent.
Therefore no vertex in $V(H) \cap (A \cup C \cup D)$ is an internal vertex of a path of~$H$. That is, if $x \in V(H) \cap (A \cup C \cup D)$ then~$x$ has degree~$1$ in~$H$.
Furthermore, $A \cup C \cup D$ is an independent set in~$G^*$. Hence, if $H=P_2+\nobreak P_5$, then at most one vertex of the~$P_2$ connected component of~$H$ can be in $A \cup C \cup D$.
We conclude that $G^*[V(H) \cap (B \cup X \cup Y)]$ contains an induced subgraph~$H'$ on four vertices that is isomorphic to~$P_1+\nobreak P_3$ if $H=P_2+P_5$ or~$P_4$ if $H=P_6$. Since~$Y$ is an independent set and~$B \cup X$ is a perfect matching, $H'$ must contain at least one vertex of~$B \cup X$ and at least one vertex of~$Y$.
As~$Y$ is complete to~$B \cup X$, we find that $H'$ contains either~$C_4$ or~$K_{1,3}$ as a (not necessarily induced) subgraph, a contradiction. This completes the proof.\qed
\end{proof}

The proof of  Theorem \ref{thm:oct-P6-P2+P5} gives a slightly stronger result if we assume the Exponential Time Hypothesis (ETH). The ETH is one of standard assumptions in complexity theory which, along with the \emph{sparsification lemma}, implies that {\sc $3$-Sat} with~$n$ variables and~$m$ clauses cannot be solved in~$2^{o(n+m)}$ time~\cite{IP01,IPZ01}.
The number of vertices in the graph~$G^*$ constructed in the proof of Theorem~\ref{thm:oct-P6-P2+P5} is $5n+\nobreak m$.
Thus an algorithm solving ({\sc Connected}) {\sc Odd Cycle Transversal} on $(P_2+\nobreak P_5,P_6)$-free graphs with~$n$ vertices in time~$2^{o(n)}$ could be used to solve {\sc Vertex Cover} on graphs with~$n$ vertices and~$m$ edges in~$2^{o(n+m)}$ time.
However, such a fast algorithm for {\sc Vertex Cover} does not exist unless the ETH fails~\cite{CFKLMM15}.
Thus we get the following statement.

\begin{corollary}
{\sc Odd Cycle Transversal} and {\sc Connected Odd Cycle Transversal} cannot be solved in~$2^{o(n)}$ time on $(P_2+\nobreak P_5,P_6)$-free graphs with~$n$ vertices, unless the ETH fails.
\end{corollary}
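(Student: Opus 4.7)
The plan is to re-use the reduction already built for Theorem~\ref{thm:oct-P6-P2+P5} as a black box, and simply track the size of the output graph carefully to transfer an ETH lower bound from \textsc{Vertex Cover} to (\textsc{Connected}) \textsc{Odd Cycle Transversal} on $(P_2+\nobreak P_5,P_6)$-free graphs. Concretely, I would argue by contrapositive: assume there exists an algorithm~$\mathcal{A}$ solving \textsc{Odd Cycle Transversal} (or its connected variant) on $(P_2+\nobreak P_5,P_6)$-free graphs on $N$ vertices in time~$2^{o(N)}$, and derive an algorithm refuting ETH.

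First I would recall the standard consequence of ETH together with the sparsification lemma of Impagliazzo, Paturi and Zane~\cite{IP01,IPZ01}: under ETH, \textsc{$3$-Sat} with $n$ variables and $m$ clauses has no~$2^{o(n+m)}$ algorithm, and by the standard chain of linear reductions this transfers to \textsc{Vertex Cover}, which has no~$2^{o(n+m)}$ algorithm on graphs with $n$ vertices and $m$ edges (see~\cite{CFKLMM15}). Next I would observe that the reduction in the proof of Theorem~\ref{thm:oct-P6-P2+P5} is \emph{linear} in the input size: starting from a \textsc{Vertex Cover} instance $(G,k)$ with~$n$ vertices and~$m$ edges, the constructed $(P_2+\nobreak P_5,P_6)$-free graph~$G^*$ has $N = 5n+m$ vertices (five vertices per vertex gadget and one per edge gadget), and $(G,k)$ is a \textsc{yes}-instance of \textsc{Vertex Cover} if and only if $(G^*, n+k)$ is a \textsc{yes}-instance of \textsc{Odd Cycle Transversal}, and equivalently of \textsc{Connected Odd Cycle Transversal}. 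The reduction itself is performed in polynomial (indeed linear) time.

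Composing the assumed algorithm~$\mathcal{A}$ with this reduction then solves \textsc{Vertex Cover} on $(G,k)$ in time
\[
\mathrm{poly}(n+m) + 2^{o(N)} \;=\; \mathrm{poly}(n+m) + 2^{o(5n+m)} \;=\; 2^{o(n+m)},
\]
contradicting ETH. Since the same reduction produces the same graph~$G^*$ and the same bound on the solution size for both the unconnected and connected variants, the corollary follows for both problems simultaneously.

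I do not anticipate a real obstacle: the content is already present in the paragraph preceding the corollary, and the only point to be careful about is making the linear dependence $N = 5n+m$ explicit, so that $2^{o(N)}$ really collapses to $2^{o(n+m)}$ after composition with the reduction. The only conceptual ingredient beyond Theorem~\ref{thm:oct-P6-P2+P5} is the (standard) sparsification-based strengthening of ETH that is needed to obtain a $2^{o(n+m)}$, rather than merely $2^{o(n)}$, lower bound for \textsc{Vertex Cover}; once this is invoked, the argument is a one-line size-tracking.
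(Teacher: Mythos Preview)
Your proposal is correct and follows essentially the same argument as the paper: track that the reduction of Theorem~\ref{thm:oct-P6-P2+P5} produces a $(P_2+\nobreak P_5,P_6)$-free graph on $N=5n+m$ vertices, invoke ETH together with the sparsification lemma to rule out a $2^{o(n+m)}$ algorithm for \textsc{Vertex Cover}, and conclude by contrapositive. The paper presents precisely this size-tracking in the paragraph immediately preceding the corollary, so there is nothing to add.
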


\section{Conclusions}\label{s-con}

We proved polynomial-time solvability of {\sc Feedback Vertex Set} and {\sc Odd Cycle Transversal} on $H$-free graphs when $H=sP_1+\nobreak P_3$ and polynomial-time solvability of their connected variants on $H$-free graphs, when $H=P_4$ or $H=sP_1+\nobreak P_3$; 
see also Table~\ref{t-thetable}, where we place these results in the context of known results for these problems on $H$-free graphs.
We also showed that {\sc Odd Cycle Transversal} and {\sc Connected Odd Cycle Transversal} are \NP-complete on $(P_2+\nobreak P_5,P_6)$-free graphs.

Natural cases for future work are the cases when $H=sP_1+\nobreak P_4$ for $s\geq 1$ and $H=P_5$ for all four problems (in particular the case when $H=P_5$ is the only open case for {\sc Odd Cycle Transversal} and {\sc Connected Odd Cycle Transversal} restricted to $P_r$-free graphs).
Note that Lemma~\ref{sp1p3-s} does not hold on $(sP_1+\nobreak P_4)$-free graphs: the disjoint union of any number of arbitrarily large stars is even $P_4$-free.

Recall that {\sc Vertex Cover} and {\sc Connected Vertex Cover} are polynomial-time solvable even on $(sP_1+\nobreak P_6)$-free graphs~\cite{GKPP19} and $(sP_1+\nobreak P_5)$-free graphs~\cite{JPP18}, respectively, for every $s\geq 0$.
In contrast to the case for {\sc Odd Cycle Transversal} and {\sc Connected Odd Cycle Transversal}, it is not known whether there is an integer~$r$ for which any of the problems {\sc Vertex Cover}, {\sc Feedback Vertex Set} or their connected variants is \NP-complete on $P_r$-free graphs.
Determining whether such an~$r$ exists is an interesting open problem. 

We note that a similar complexity study has also been undertaken for the independent variants of the problems {\sc Feedback Vertex Set} and {\sc Odd Cycle Transversal}.\footnote{{\sc Independent Vertex Cover}
can be seen as {\sc $2$-Colouring}, with the additional restriction that one of the colours can be used at most~$k$ times. This problem is polynomial-time solvable.}
In particular, {\sc Independent Feedback Vertex Set} and {\sc Independent Odd Cycle Transversal} are polynomial-time solvable on $P_5$-free graphs~\cite{BDFJP19}, but their complexity status is unknown on $P_6$-free graphs.
It is not known whether there is an integer~$r$ such that {\sc Independent Feedback Vertex Set} or {\sc Independent Odd Cycle Transversal} is \NP-complete on $P_r$-free graphs.

We conclude that in order to make any further progress, we must better understand the structure of $P_r$-free graphs.
This topic has been well studied in recent years, see also for example~\cite{GJPS17,GOPSSS18}.
However, more research and new approaches will be needed.

\bibliographystyle{abbrv}
\bibliography{mybib}

\begin{thebibliography}{10}

\bibitem{BY89}
E.~Balas and C.~S. Yu.
\newblock On graphs with polynomially solvable maximum-weight clique problem.
\newblock {\em Networks}, 19(2):247--253, 1989.

\bibitem{BHKP17}
R.~Belmonte, P.~van~'t Hof, M.~Kami\'nski, and D.~Paulusma.
\newblock The price of connectivity for feedback vertex set.
\newblock {\em Discrete Applied Mathematics}, 217:132--143, 2017.

\bibitem{BDFJP19}
M.~Bonamy, K.~K. Dabrowski, C.~Feghali, M.~Johnson, and D.~Paulusma.
\newblock Independent feedback vertex set for {$P_5$}-free graphs.
\newblock {\em Algorithmica}, 81(4):1342--1369, 2019.

\bibitem{BK85}
A.~Brandst{\"a}dt and D.~Kratsch.
\newblock On the restriction of some {NP-complete} graph problems to
  permutation graphs.
\newblock {\em Proc. FCT 1985, LNCS}, 199:53--62, 1985.

\bibitem{BLS99}
A.~Brandst{\"a}dt, V.~B. Le, and J.~P. Spinrad.
\newblock {\em Graph Classes: A Survey}, volume~3 of {\em SIAM Monographs on
  Discrete Mathematics and Applications}.
\newblock SIAM, 1999.

\bibitem{Ca19}
E.~Camby.
\newblock Price of connectivity for the vertex cover problem and the dominating
  set problem: Conjectures and investigation of critical graphs.
\newblock {\em Graphs and Combinatorics}, 35(1):103--118, 2019.

\bibitem{CCFS14}
E.~Camby, J.~Cardinal, S.~Fiorini, and O.~Schaudt.
\newblock The price of connectivity for vertex cover.
\newblock {\em Discrete Mathematics {\&} Theoretical Computer Science},
  16(1):207--224, 2014.

\bibitem{CS14}
E.~Camby and O.~Schaudt.
\newblock The price of connectivity for dominating set: Upper bounds and
  complexity.
\newblock {\em Discrete Applied Mathematics}, 177:53--59, 2014.

\bibitem{CL10}
J.~Cardinal and E.~Levy.
\newblock Connected vertex covers in dense graphs.
\newblock {\em Theoretical Computer Science}, 411(26--28):2581--2590, 2010.

\bibitem{CHJMP18}
N.~Chiarelli, T.~R. Hartinger, M.~Johnson, M.~Milani\v{c}, and D.~Paulusma.
\newblock Minimum connected transversals in graphs: New hardness results and
  tractable cases using the price of connectivity.
\newblock {\em Theoretical Computer Science}, 705:75--83, 2018.

\bibitem{CFKLMM15}
M.~Cygan, F.~V. Fomin, L.~Kowalik, D.~Lokshtanov, D.~Marx, M.~Pilipczuk,
  M.~Pilipczuk, and S.~Saurabh.
\newblock {\em Parameterized Algorithms}.
\newblock Springer, 1st edition, 2015.

\bibitem{EGM10}
B.~Escoffier, L.~Gourv\`es, and J.~Monnot.
\newblock Complexity and approximation results for the connected vertex cover
  problem in graphs and hypergraphs.
\newblock {\em Journal of Discrete Algorithms}, 8(1):36--49, 2010.

\bibitem{FJPP19}
C.~Feghali, M.~Johnson, G.~Paesani, and D.~Paulusma.
\newblock On cycle transversals and their connected variants in the absence of
  a small linear forest.
\newblock {\em Proc. FCT 2019, LNCS}, 11651:258--273, 2019.

\bibitem{FM09}
H.~Fernau and D.~F. Manlove.
\newblock Vertex and edge covers with clustering properties: Complexity and
  algorithms.
\newblock {\em Journal of Discrete Algorithms}, 7(2):149--167, 2009.

\bibitem{GJ77}
M.~R. Garey and D.~S. Johnson.
\newblock The rectilinear {Steiner} tree problem is {NP}-complete.
\newblock {\em SIAM Journal on Applied Mathematics}, 32(4):826--834, 1977.

\bibitem{GJS76}
M.~R. Garey, D.~S. Johnson, and L.~J. Stockmeyer.
\newblock Some simplified {NP-complete} graph problems.
\newblock {\em Theoretical Computer Science}, 1(3):237--267, 1976.

\bibitem{GJPS17}
P.~A. Golovach, M.~Johnson, D.~Paulusma, and J.~Song.
\newblock A survey on the computational complexity of colouring graphs with
  forbidden subgraphs.
\newblock {\em Journal of Graph Theory}, 84(4):331--363, 2017.

\bibitem{GS09}
A.~Grigoriev and R.~Sitters.
\newblock Connected feedback vertex set in planar graphs.
\newblock {\em Proc. WG 2009, LNCS}, 5911:143--153, 2010.

\bibitem{GOPSSS18}
C.~Groenland, K.~Okrasa, P.~Rz\k{a}\.zewski, A.~Scott, P.~Seymour, and
  S.~Spirkl.
\newblock {$H$-colouring} {$P_t$-free} graphs in subexponential time.
\newblock {\em Discrete Applied Mathematics}, in press.

\bibitem{GKPP19}
A.~Grzesik, T.~Klimo\v{s}ov\'a, M.~Pilipczuk, and M.~Pilipczuk.
\newblock Polynomial-time algorithm for maximum weight independent set on
  {$P_6$-free} graphs.
\newblock {\em Proc. SODA 2019}, pages 1257--1271, 2019.

\bibitem{HJMP16}
T.~R. Hartinger, M.~Johnson, M.~Milani\v{c}, and D.~Paulusma.
\newblock The price of connectivity for cycle transversals.
\newblock {\em European Journal of Combinatorics}, 58:203--224, 2016.

\bibitem{IP01}
R.~Impagliazzo and R.~Paturi.
\newblock On the complexity of {$k$-SAT}.
\newblock {\em Journal of Computer and System Sciences}, 62(2):367--375, 2001.

\bibitem{IPZ01}
R.~Impagliazzo, R.~Paturi, and F.~Zane.
\newblock Which problems have strongly exponential complexity?
\newblock {\em Journal of Computer and System Sciences}, 63(4):512--530, 2001.

\bibitem{JPP18}
M.~Johnson, G.~Paesani, and D.~Paulusma.
\newblock Connected vertex cover for {($sP_1+P_5$)}-free graphs.
\newblock {\em Algorithmica}, in press.

\bibitem{Mi80}
G.~J. Minty.
\newblock On maximal independent sets of vertices in claw-free graphs.
\newblock {\em Journal of Combinatorial Theory, Series B}, 28(3):284--304,
  1980.

\bibitem{MPRS12}
N.~Misra, G.~Philip, V.~Raman, and S.~Saurabh.
\newblock On parameterized independent feedback vertex set.
\newblock {\em Theoretical Computer Science}, 461:65--75, 2012.

\bibitem{Mo12}
R.~Mosca.
\newblock Stable sets for {$(P_6, K_{2,3})$-free} graphs.
\newblock {\em Discussiones Mathematicae Graph Theory}, 32:387--401, 2012.

\bibitem{Mu17}
A.~Munaro.
\newblock Boundary classes for graph problems involving non-local properties.
\newblock {\em Theoretical Computer Science}, 692:46--71, 2017.

\bibitem{OR19}
K.~Okrasa and P.~Rz\k{a}\.zewski.
\newblock Subexponential algorithms for variants of homomorphism problem in
  string graphs.
\newblock {\em Proc. WG 2019, LNCS}, 11789, to appear.

\bibitem{Po74}
S.~Poljak.
\newblock A note on stable sets and colorings of graphs.
\newblock {\em Commentationes Mathematicae Universitatis Carolinae},
  15:307--309, 1974.

\bibitem{PH08}
P.~L.~K. Priyadarsini and T.~Hemalatha.
\newblock Connected vertex cover in 2-connected planar graph with maximum
  degree 4 is {NP-complete}.
\newblock {\em International Journal of Mathematical, Physical and Engineering
  Sciences}, 2(1):51--54, 2008.

\bibitem{Sh80}
N.~Sbihi.
\newblock Algorithme de recherche d'un stable de cardinalit\'e maximum dans un
  graphe sans \'etoile.
\newblock {\em Discrete Mathematics}, 29(1):53--76, 1980.

\bibitem{Sp83}
E.~Speckenmeyer.
\newblock {\em {Untersuchungen} zum {Feedback} {Vertex} {Set} {Problem} in
  ungerichteten {Graphen}}.
\newblock PhD thesis, Universit\"at Paderborn, 1983.

\bibitem{TIAS77}
S.~Tsukiyama, M.~Ide, H.~Ariyoshi, and I.~Shirakawa.
\newblock A new algorithm for generating all the maximal independent sets.
\newblock {\em SIAM Journal on Computing}, 6(3):505--517, 1977.

\bibitem{UKG88}
S.~Ueno, Y.~Kajitani, and S.~Gotoh.
\newblock On the nonseparating independent set problem and feedback set problem
  for graphs with no vertex degree exceeding three.
\newblock {\em Discrete Mathematics}, 72(1--3):355--360, 1988.

\bibitem{WKO91}
T.~Watanabe, S.~Kajita, and K.~Onaga.
\newblock Vertex covers and connected vertex covers in 3-connected graphs.
\newblock {\em Proc. IEEE International Sympoisum on Circuits and Systems
  1991}, 2:1017--1020, 1991.

\end{thebibliography}

\appendix

\section{The Proof of Theorem~\ref{t-cvc}}\label{a-cvc}

{\bf This appendix is for reviewing purposes only.}
We will adapt, in a straightforward way, the proof from~\cite{JPP18} for showing that {\sc Connected Vertex Cover} is polynomial-time solvable on $(sP_1+\nobreak P_5)$-free graphs for every $s\geq 1$.

We need the following definitions and lemmas. Let $G=(V,E)$ be a graph.
The \emph{contraction} of an edge~$uv\in E$ deletes the vertices~$u$ and~$v$ and replaces them by a new vertex made adjacent to precisely those vertices that were adjacent to~$u$ or~$v$ in~$G$ (without introducing self-loops or multiple edges).
Recall that a linear forest is the disjoint union of one or more paths.
The following lemma is a straightforward observation.

\begin{lemma}\label{l-contract}
Let~$H$ be a linear forest and let~$G$ be a connected $H$-free graph.
Then the graph obtained from~$G$ after contracting an edge is also connected and~$H$-free.
\end{lemma}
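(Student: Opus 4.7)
The plan is to prove the two conclusions separately. Connectedness is immediate: any path in $G$ descends to a walk (and hence a path) in the contracted graph~$G'$ after replacing every occurrence of $u$ or~$v$ by the merged vertex~$w$, so $G'$ is connected. I would dispatch this in one sentence.

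For $H$-freeness I would argue by contradiction. Suppose $G'$ contains an induced copy $F$ of~$H$. If $w \notin V(F)$, then $F$ is already an induced subgraph of~$G$, contradicting $H$-freeness of~$G$. So assume $w \in V(F)$. The key observation is that $N_{G'}(w) = (N_G(u) \cup N_G(v)) \setminus \{u,v\}$, and since $F$ is induced in $G'$, every vertex of $V(F) \setminus \{w\}$ that is non-adjacent to $w$ in $F$ is non-adjacent in $G$ to both $u$ and~$v$. Since $H$ is a linear forest we have $\deg_F(w) \in \{0,1,2\}$, and I would case-split on this value.

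If $\deg_F(w) \le 1$, let $x$ be the unique neighbour of $w$ in $F$ when one exists; whichever of $u,v$ is adjacent in $G$ to~$x$ (or either, when $\deg_F(w)=0$) can replace~$w$, and the key observation rules out extraneous edges, so $G$ contains an induced copy of~$H$, a contradiction. The genuinely interesting case is $\deg_F(w) = 2$, with neighbours $x$ and $y$ of $w$ lying in a path component~$P$ of~$F$. If some $z \in \{u,v\}$ is adjacent in $G$ to both $x$ and~$y$, the same single-vertex substitution works. Otherwise, after possibly swapping labels, $u$ is adjacent to $x$ only and $v$ to $y$ only; then $(V(F) \setminus \{w\}) \cup \{u,v\}$ induces in~$G$ the linear forest obtained from $F$ by replacing~$w$ by the edge $uv$, so the component $P$ has grown by exactly one vertex while no new edges cross to the other components (by the key observation). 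Deleting either endpoint of the lengthened path returns an induced copy of~$H$ in~$G$, the desired contradiction. The main obstacle is this last sub-case: a naive one-vertex substitution fails, and one must exploit the contracted edge $uv$ itself to elongate the path and then trim an endpoint. This is precisely the step that uses the hypothesis that every component of $H$ is a \emph{path}, since a longer path still contains $P_{|P|}$ as an induced sub-path.
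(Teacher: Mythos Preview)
Your argument is correct. The paper does not actually give a proof of this lemma; it simply states that it ``is a straightforward observation'' and moves on. Your case analysis on $\deg_F(w)$, together with the crucial step of inserting both $u$ and $v$ when neither alone sees both neighbours of $w$ and then trimming a leaf of the elongated path, is exactly the standard way to justify this fact, and you have handled the only delicate sub-case cleanly.
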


We need the following lemmas given in~\cite{JPP18}.

\begin{lemma}[\cite{JPP18}]\label{l-2}
Let $s\geq 0$ and let~$G$ be a connected $(sP_1+\nobreak P_5)$-free graph.
Then~$G$ has a connected dominating set~$D$ that is either a clique or has size at most $2s^2+s+3$.
Moreover, $D$ can be found in $O(n^{2s^2+s+3})$ time.
\end{lemma}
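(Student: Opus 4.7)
The plan is to proceed by induction on $s$. In the base case $s=0$, $G$ is a connected $P_5$-free graph, and by the classical theorem of Bacs\'o and Tuza, $G$ admits a connected dominating set that is either a clique or an induced $P_3$. Both possibilities satisfy the conclusion, the $P_3$ having size $3 = 2\cdot 0^2+0+3$; the algorithm enumerates all subsets of at most~$3$ vertices and checks each for being a connected dominating set, which runs in $O(n^3)$ time.

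For the inductive step, suppose $s\geq 1$ and the statement holds for $s-1$. Let $G$ be a connected $(sP_1+\nobreak P_5)$-free graph. If $G$ happens also to be $((s-1)P_1+\nobreak P_5)$-free, apply the inductive hypothesis; the returned set has size at most $2(s-1)^2+(s-1)+3=2s^2-3s+4\leq 2s^2+s+3$ (or is a clique), as required. Otherwise, fix an induced copy of $(s-1)P_1+\nobreak P_5$ in~$G$, with vertex set $U=I\cup V(P)$, where $I=\{v_1,\ldots,v_{s-1}\}$ is independent and $P=u_1u_2u_3u_4u_5$ is an induced $P_5$ anti-complete to~$I$. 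The crucial observation is that $U$ is already a dominating set of~$G$: if some $w\in V(G)\setminus N[U]$ existed, then $\{w\}\cup I\cup V(P)$ would induce an $sP_1+\nobreak P_5$ in~$G$, a contradiction.

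It remains to augment $U$ with a small set $R$ of \emph{connecting} vertices so that $G[U\cup R]$ is connected. The idea is to add, for each $i\in\{1,\ldots,s-1\}$ with $v_i$ not already connected to $V(P)$ within the current set, a shortest path $Q_i$ in~$G$ from $v_i$ to $V(P)$. Because $Q_i$ is induced and (by shortest-path minimality) every internal vertex of $Q_i$ is non-adjacent to~$V(P)$, an overly long $Q_i$ would yield an induced $P_5$ among its internal vertices that is anti-complete to~$P$; combining this $P_5$ with the other $v_j$'s (those not adjacent to $Q_i$) and with two non-adjacent vertices of~$P$ would produce a forbidden $sP_1+\nobreak P_5$. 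Vertices $v_j$ that do meet $Q_i$ can be handled by rerouting their own paths through $Q_i$, which is accommodated by processing the $v_i$'s in an appropriate order. A careful amortisation yields $|R|\leq 2s^2-1$ and hence $|U\cup R|\leq (s+4)+(2s^2-1)=2s^2+s+3$, matching the stated bound.

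The main obstacle will be the bookkeeping in the path-length argument: one must be sure that the candidate $sP_1+\nobreak P_5$ produced in the contradiction is truly induced, which requires case analysis on how the paths $Q_i$ interact with the remaining $v_j$'s and with~$P$. For the algorithmic statement, the procedure iterates over all $O(n^{s+4})$ choices of~$U$ (checking that $G[U]$ induces $(s-1)P_1+\nobreak P_5$ and that $U$ dominates~$G$) and, for each, over all $O(n^{2s^2-1})$ choices of the connecting set~$R$ (checking that $G[U\cup R]$ is connected), together with the inductive call in the alternate case and a clique search; the overall running time is $O(n^{2s^2+s+3})$, as claimed.
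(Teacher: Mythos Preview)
The paper does not prove this lemma; it is quoted from~\cite{JPP18} and used as a black box in the appendix proof of Theorem~\ref{t-cvc}. There is therefore no proof in the present paper to compare against. That said, your overall strategy is the standard one and is correct in outline, with two points worth noting.

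First, your path-length bookkeeping is more delicate than necessary. Since $P_{2s+5}$ contains an induced $sP_1+P_5$ (take $u_1,u_3,\ldots,u_{2s-1}$ as the $sP_1$ and $u_{2s+1}\cdots u_{2s+5}$ as the $P_5$), every $(sP_1+P_5)$-free graph is $P_{2s+5}$-free, so every shortest path in~$G$ has at most $2s+4$ vertices. Now connect greedily: initialise $D:=V(P)$ and, for each $v_i$ in turn, add the internal vertices of a shortest path from $v_i$ to the current~$D$. Each such path contributes at most $2s+2$ new vertices, so $|R|\le (s-1)(2s+2)=2s^2-2$ and $|D|\le (s+4)+(2s^2-2)=2s^2+s+2$. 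This bypasses entirely the interaction between the $Q_i$'s and the remaining $v_j$'s that you flag as ``the main obstacle'', and no amortisation is needed.

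Second, your base-case algorithm has a genuine gap. Enumerating all $3$-subsets finds a small connected dominating set if one exists, but in the alternative the lemma requires you to \emph{output} a dominating clique, which may have arbitrarily large size. Your brief mention of ``a clique search'' does not address this. You need a constructive version of the Bacs\'o--Tuza result (or, equivalently, the Camby--Schaudt theorem that every connected $P_t$-free graph has a connected dominating set inducing a $P_{t-2}$-free graph, computable in polynomial time) to actually produce the clique. The same issue propagates through the inductive call whenever it returns a clique rather than a bounded-size set.
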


\begin{lemma}[\cite{JPP18}]\label{l-double}
Let~$J$ be an independent set in a connected graph~$G$ such that~$J$ has a vertex~$y$ that is adjacent to every vertex of $G-J$.
Let~$J'$ consist of those vertices of $J\setminus \{y\}$ that have two adjacent neighbours in $G-J$ (or equivalently, in~$G$).
Then a subset~$S$ of the vertex set of~$G$ is a connected vertex cover of~$G$ that contains~$J$ if and only if $S\setminus J'$ is a connected vertex cover of $G-J'$ that contains $J\setminus J'$.
\end{lemma}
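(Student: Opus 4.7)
The proposal is to prove the two implications of the biconditional separately, using $y$ as a universal hub for the connectivity arguments and the defining property of $J'$ (existence of two adjacent neighbours outside $J$) for the reconnection step.

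For the forward direction, suppose $S$ is a connected vertex cover of $G$ with $J \subseteq S$, and set $S^* := S \setminus J'$. I first verify that $S^*$ is a vertex cover of $G-J'$: any edge $e$ of $G-J'$ has both endpoints outside $J'$, and since $e$ is already covered by $S$, its cover vertex lies in $S \cap (V(G)\setminus J') = S^*$. The inclusion $J\setminus J' \subseteq S^*$ is immediate from $J \subseteq S$. The main work is connectivity of $G[S^*]$: I will show that every vertex $v \in S^*$ is joined to $y$ by a path in $G[S^*]$, using the fact that $y \in J\setminus J' \subseteq S^*$ and that $y$ is adjacent to all of $V(G)\setminus J$. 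Three cases: if $v = y$ there is nothing to do; if $v \in V(G)\setminus J$, the edge $vy$ lies in $G-J'$ (both endpoints are outside $J'$) and witnesses adjacency in $G[S^*]$; and if $v \in J \setminus (J' \cup \{y\})$, I take any $G[S]$-path from $v$ to $y$, observe its first vertex after $v$ must lie in $V(G)\setminus J$ (since $J$ is independent), call it $u$, and then both $vu$ and $uy$ are edges of $G-J'$ with $u \in S^*$, giving the required two-edge path.

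For the backward direction, suppose $S^*$ is a connected vertex cover of $G-J'$ containing $J\setminus J'$, and let $S := S^* \cup J'$. Then $J = (J\setminus J') \cup J' \subseteq S$. To check $S$ covers $G$, I split the edges of $G$: those in $G-J'$ are covered by $S^* \subseteq S$, and the remaining ones have at least one endpoint in $J' \subseteq S$. For connectivity of $G[S]$, I use $G[S^*]$ as a connected backbone (which it is, since connectedness in $G-J'$ implies connectedness in $G$) and attach each $v \in J'$ to it. Here I invoke the defining property of $J'$: by definition, $v$ has two adjacent neighbours $u_1, u_2 \in V(G) \setminus J$. The edge $u_1u_2$ lies in $G - J'$, so $S^*$ (as a vertex cover of $G-J'$) contains at least one of $u_1, u_2$, and that vertex is adjacent to $v$ in $G$, attaching $v$ to $G[S^*]$ inside $G[S]$.

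I do not expect a genuine obstacle here; the only subtle point is making sure the ``hub'' argument in the forward direction uses only edges of $G - J'$, which is handled by noting that both $y$ and the intermediate neighbour $u$ lie outside $J'$. The lemma can then be asserted in a few lines per direction, with no case analysis beyond the three positional cases ($v$ equals $y$, is outside $J$, or lies in $J\setminus(J'\cup\{y\})$) used to route each vertex of $S^*$ to the hub $y$.
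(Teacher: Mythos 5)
Your proof is correct, and it is the natural argument: the paper itself does not prove Lemma~\ref{l-double} but imports it from~\cite{JPP18}, and the standard proof there proceeds exactly as you do, using $y$ as a universal hub (adjacent to all of $G-J$) to certify connectivity after deleting $J'$, and the two adjacent neighbours of each $v\in J'$ to force a covering neighbour in $S^*$ when reattaching. All the delicate points are handled: the intermediate vertex $u$ on the $v$--$y$ path lies outside $J$ (hence outside $J'$) because $J$ is independent, and the edge $u_1u_2$ survives in $G-J'$ so that the vertex cover $S^*$ must hit it.
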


\noindent
We also need an auxiliary problem defined in~\cite{JPP18}.
Let~$G$ be a connected graph, let $J\subseteq V_G$ be a subset of the vertex set of~$G$ and let~$y$ be a vertex of~$J$.
We call say that a triple $(G,J,y)$ is \emph{cover-complete} if it has the following three properties:
\begin{enumerate}[(a)]
\item $J$ is an independent set;
\item $y$ is adjacent to every vertex of $G-J$;
\item the neighbours of each vertex in $J\setminus \{y\}$ form an independent set in $G-J$.
\end{enumerate}
This leads to the following optimization problem:

\problemdef{{\sc Connected Vertex Cover Completion}}{a cover-complete triple $(G,J,y)$.}{find a smallest connected vertex cover~$S$ of~$G$ such that $J\subseteq S$.}

\noindent
We also need the following two lemmas.

\begin{lemma}[\cite{JPP18}]\label{l-vcvc}
Let $(G,\{y\},y)$ be a cover-complete triple, where~$G$ is an $(sP_1+\nobreak P_5)$-free graph for some $s\geq 0$.
Then it is possible to compute a smallest connected vertex cover of~$G$ that contains~$y$ in $O(n^{s+14})$ time.
\end{lemma}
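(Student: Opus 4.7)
The plan is to reduce to a single application of Theorem~\ref{sp1p6-vc}. First I would unpack what it means for $(G,\{y\},y)$ to be cover-complete: conditions~(a) and~(c) are vacuous when $J=\{y\}$, while condition~(b) says precisely that $y$ is adjacent to every vertex of $G-\{y\}$, that is, $y$ is a universal vertex of~$G$.

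Second, the universality of~$y$ makes the connectivity requirement free: for any $S\subseteq V(G)$ with $y\in S$, the vertex~$y$ is adjacent in~$G[S]$ to every other vertex of~$S$, so $G[S]$ is connected. Hence a smallest connected vertex cover of~$G$ containing~$y$ coincides with a smallest (ordinary) vertex cover of~$G$ containing~$y$. Moreover, once~$y$ is forced into the cover, all edges incident to~$y$ are automatically covered, so $S\setminus\{y\}$ is a vertex cover of~$G$ if and only if it is a vertex cover of the induced subgraph $G-y$. Thus it suffices to compute a minimum vertex cover of $G-y$ and then add~$y$.

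Third, I would apply Theorem~\ref{sp1p6-vc} to $G-y$. Since $G-y$ is an induced subgraph of the $(sP_1+\nobreak P_5)$-free graph~$G$, it is itself $(sP_1+\nobreak P_5)$-free; and since $sP_1+\nobreak P_5$ is an induced subgraph of $sP_1+\nobreak P_6$, every $(sP_1+\nobreak P_5)$-free graph is $(sP_1+\nobreak P_6)$-free. Hence Theorem~\ref{sp1p6-vc} yields a minimum vertex cover of~$G-y$ in polynomial time, and adjoining~$y$ gives the required smallest connected vertex cover of~$G$ that contains~$y$.

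The only delicate point is matching the claimed running-time bound $O(n^{s+14})$: I would verify it by inspecting the complexity analysis of the Grzesik et al.\ algorithm underlying Theorem~\ref{sp1p6-vc} and confirming that its running time on $(sP_1+\nobreak P_6)$-free graphs with~$n$ vertices is indeed $O(n^{s+14})$. Beyond this bookkeeping, there is no genuine obstacle---the entire argument is a one-line reduction exploiting the universality of~$y$.
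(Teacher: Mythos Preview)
Your proposal is correct and matches the approach the paper implicitly relies on. The paper does not give its own proof of this lemma---it is quoted from~\cite{JPP18}---but in Case~\ref{case:1-1} of the proof of Theorem~\ref{t-cvc} the paper indicates that the argument is precisely the one you describe: once~$y$ is universal, connectivity comes for free, and one simply computes a minimum vertex cover of the remaining graph via Theorem~\ref{sp1p6-vc}, with the $O(n^{s+14})$ bound inherited from the running-time analysis of the Grzesik et al.\ algorithm as carried out in~\cite{JPP18}.
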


\begin{lemma}[\cite{JPP18}]\label{t-ind2}
For every $s\geq 0$, {\sc Connected Vertex Cover Completion} can be solved in $O(n^{2s+19})$ time for cover-complete triples $(G,J,y)$, where~$G$ is an $(sP_1+\nobreak P_5)$-free graph.
\end{lemma}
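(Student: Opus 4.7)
The plan is to prove Lemma~\ref{t-ind2} by induction on $|J|$, using Lemma~\ref{l-vcvc} as the base case. If $|J|=1$, then $J=\{y\}$ and Lemma~\ref{l-vcvc} immediately yields a smallest connected vertex cover of $G$ containing $y$ in $O(n^{s+14})$ time. The inductive step will reduce a cover-complete triple $(G,J,y)$ with $|J|\geq 2$ to triples with strictly smaller $J$-set, while carefully controlling the total number of reductions so that the overall running time remains polynomial.

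For the inductive step, fix some $v\in J\setminus\{y\}$. In any feasible solution $S$ we have $v\in S$, and since $G[S]$ must be connected, some neighbour of $v$ must also lie in $S$; by cover-completeness, $N(v)\subseteq V(G)\setminus J$ and $y$ is adjacent to every vertex of $V(G)\setminus J$, so any such neighbour $u$ automatically provides a path $v$--$u$--$y$ inside $G[S]$ that connects $v$ to $y$. The natural strategy is to branch over the choice of such a \emph{connector} $u\in N(v)$: for each candidate $u$, commit $u$ to the cover, delete $v$ from the graph, and add $1$ to the eventual cover size for $v$. The reduced triple $(G-v,\, J\setminus\{v\},\,y)$ is still cover-complete, since $J\setminus\{v\}$ remains independent, $y$ still dominates what now plays the role of $G-J$, and the neighbourhoods of the remaining vertices of $J\setminus\{y\}$ are unchanged (because $v$ was non-adjacent to all of $J$).

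The main obstacle is that enumerating a connector for each $v\in J\setminus\{y\}$ independently would produce $\prod_{v}|N(v)|$ branches, which is not polynomial. The plan is to exploit the $(sP_1+\nobreak P_5)$-free structure to prune this search to $O(n^{s+5})$ effective branches, closely following the argument for {\sc Connected Vertex Cover} in~\cite{JPP18}. The intuition is that if many vertices of $J\setminus\{y\}$ had pairwise-disjoint independent neighbourhoods forcing distinct connectors, the resulting configuration (together with $y$ and a few neighbours) would embed a forbidden $sP_1+\nobreak P_5$; consequently, after fixing a carefully chosen bounded set of connectors, the remaining vertices of $J\setminus\{y\}$ will be handled ``for free''. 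A secondary subtlety is that a committed connector $u$ cannot simply be added to $J$, since $u$ is adjacent to $y$ and that would violate the independence of $J$; instead I would treat the chosen connectors as an auxiliary prescribed set and, at the leaves of the branching tree, invoke a mild strengthening of Lemma~\ref{l-vcvc} that accommodates such a bounded prescribed subset of $V(G)\setminus J$. Combining the $O(n^{s+5})$ branching overhead with the $O(n^{s+14})$ base-case cost should then yield the claimed $O(n^{2s+19})$ running time.
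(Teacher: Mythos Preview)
The paper does not contain a proof of Lemma~\ref{t-ind2}: the lemma is quoted from~\cite{JPP18} and invoked as a black box in the appendix proof of Theorem~\ref{t-cvc}, so there is nothing here to compare your argument against.

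Regarding the proposal itself, you have correctly isolated the base case (Lemma~\ref{l-vcvc}) and the main difficulty (each $v\in J\setminus\{y\}$ must be joined to $y$ via some connector in $N(v)\cap S$, and branching over these connectors is a priori exponential in $|J|$). But the step that carries the entire weight---your claim that $(sP_1+P_5)$-freeness limits the effective branching to $O(n^{s+5})$ because only a bounded subset of $J\setminus\{y\}$ needs its connector guessed while the rest are ``handled for free''---is asserted with a one-line heuristic and no actual argument: you do not say which bounded subset suffices, nor why the remaining connectors are automatically present in an optimal cover, nor how the forbidden $sP_1+P_5$ would be exhibited. You also appeal to an unstated ``mild strengthening of Lemma~\ref{l-vcvc}'' at the leaves, and your running-time arithmetic assumes one global round of $O(n^{s+5})$ guesses followed by a single base-case call, which is inconsistent with an induction on $|J|$ that peels off one vertex at a time. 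In short, the plan has the right shape, but the crucial structural lemma that bounds the branching is missing rather than merely sketched.
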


\noindent
We are now ready to prove Theorem~\ref{t-cvc}, which we restate below.
The proof mimics the proof of~\cite{JPP18} and as mentioned at the start of this section, we include it only for reviewing purposes.

\medskip
\noindent
\faketheorem{Theorem~\ref{t-cvc} (restated).}
{\itshape For every $s\geq 0$, {\sc Connected Vertex Cover Extension} can be solved in polynomial time on $(sP_1+\nobreak P_5)$-free graphs.}

\begin{proof}
Let~$G$ be an $(sP_1+\nobreak P_5)$-free graph on~$n$ vertices for some $s\geq 0$ and let $W\subseteq V(G)$ be a subset of vertices of~$G$.
We may assume without loss of generality that~$G$ is connected.
By Lemma~\ref{l-2} we can first compute in $O(n^{2s^2+s+3})$ time a connected dominating set~$D$ that either has size at most $2s^2+s+3$ or is a clique.
We note that, if~$D$ is a clique, any vertex cover of~$G$ contains all but at most one vertex of~$D$.
This leads to a case analysis where we guess the subset $D^*\subseteq D\setminus W$ of vertices not in a smallest connected vertex cover of~$G$ that contains~$W$.
That is, we choose a set of at most one vertex if~$D$ is a clique and a set of at most $|D\setminus W|$ vertices otherwise, and eventually look at all such sets.
As $|D\setminus W|\leq |D|\leq 2s^2+s+3$ if~$D$ is not a clique, the number of guesses is $O(n^{2s^2+s+3})$.
For each guess of~$D^*$, we compute a smallest connected vertex cover~$S_{D^*}$ that contains all vertices of $(D\setminus D^*)\cup W$ and no vertex of~$D^*$.
Then, at the end, we return one that has minimum size overall.
In particular we note that, since~$D$ is a connected dominating set of~$G$, $D\cup W$ is also a connected dominating set of~$G$.

Let~$D^*$ be a guess.
Before we start our case analysis we first prove the following claim.

\clm{\label{clm:1-1}We may assume, at the expense of an~$O(n^{16s^3+4})$ factor in the running time, that $D\setminus D^*$ is connected.}

\medskip
\noindent
We prove Claim~\ref{clm:1-1} as follows.
Suppose $D\setminus D^*$ is not connected.
Recall that~$G[D]$ is either a complete graph or has size at most $2s^2+s+3$.
In the first case, $G[D\setminus D^*]$ is connected.
Hence, the second case applies so~$D$ has size at most $2s^2+s+3$.
Let $v\in D\setminus D^*$.
As~$G$ is $(sP_1+\nobreak P_5)$-free, $G$ is also $P_{5+2s}$-free.
Hence, for each $u\in D\setminus (D^*\cup \{v\})$, every connected vertex cover of~$G$ contains a path of at most $5+2s-1$ vertices that connects~$u$ to~$v$.
We will guess all these $u-v$-paths (using only vertices from $G-D^*$) and add their vertices to~$D$.
As the number of paths is at most $2s^2+s+2$, this branching adds an $O(n^{(5+2s-3)(2s^2+s+2)})=O(n^{16s^3+4})$ factor to our running time and increases our set~$D$ by at most~$24s^3$ extra vertices.
We have proven Claim~\ref{clm:1-1}.

\medskip
\noindent
We distinguish two cases.

\thmcase{\label{case:1-1}$D^*=\emptyset$.}
We compute a minimum vertex cover~$S'$ of $G-(D\cup W)$ in polynomial time by Theorem~\ref{sp1p6-vc}.
To be more precise, this takes~$O(n^{s+14})$ time by using the same arguments as in the proof of Lemma~\ref{l-vcvc} (see~\cite{JPP18}).
Clearly $S'\cup D\cup W$ is a vertex cover of~$G$.
As~$D$ is a connected dominating set, $S'\cup D\cup W$ is even a connected vertex cover of~$G$.
Let $S_\emptyset=S'\cup D\cup W$.
As~$S'$ is a minimum vertex cover of $G-(D\cup W)$, $S_\emptyset$ is a smallest connected vertex cover of~$G$ that contains all vertices of~$D\cup W$.
We remember~$S_\emptyset$.
Note that~$S_\emptyset$ is found in~$O(n^{s+14})$ time.

\thmcase{\label{case:1-2}$1\leq |D^*|\leq |D|$\; (recall that $|D|\leq 2s^2+s+3$).}
Recall that we are looking for a smallest connected vertex cover of~$G$ that contains every vertex of $(D\setminus D^*)\cup W$, but does not contain any vertex of~$D^*$.
Hence~$D^*$ must be an independent set, disjoint from~$W$, and $G-D^*$ must be connected (if one of these conditions is false, then we stop considering the guess~$D^*$).
Moreover, a vertex cover that contains no vertex of~$D^*$ must contain all vertices of~$N_G(D^*)$.
Hence we can safely contract not only any edge between two vertices of $(D\setminus D^*)\cup W$, but also any edge between two vertices in~$N_G(D^*)$ or between a vertex of $(D\setminus D^*)\cup W$ and a vertex in~$N_G(D^*)$.
We perform edge contractions recursively and as long as possible while remembering all the edges that we contract.
This takes~$O(n)$ time.
Let~$G^*$ be the resulting graph.

Note that the set~$D^*$ still exists in~$G^*$, as we did not contract any edges with an endpoint in~$D^*$.
By Claim~\ref{clm:1-1}, the set $D\setminus D^*$ in~$G$ corresponds to exactly one vertex of~$G^*$.
We denote this vertex by~$y$.
The set~$W$ of~$G$ corresponds to an independent set of~$G^*$.
We denote this set by~$W^*$.
We observe the following equivalence, which is obtained after uncontracting all the contracted edges.

\clm{\label{clm:1-2}Every smallest connected vertex cover of~$G^*$ that contains $\{y\}\cup W^*$ and that does not contain any vertex of~$D^*$ corresponds to a smallest connected vertex cover of~$G$ that contains $(D\setminus D^*)\cup W$ and that does not contain any vertex of~$D^*$, and vice versa.}

\medskip
\noindent
As we obtained~$G^*$ in~$O(n)$ time, and we can also uncontract all contracted edges in~$O(n)$ time, Claim~\ref{clm:1-2} tells us that we may consider~$G^*$ instead of~$G$.
As~$G$ is connected and $(sP_1+\nobreak P_5)$-free, $G^*$ is also connected and $(sP_1+\nobreak P_5)$-free by Lemma~\ref{l-contract}.

We write $J^*=N_{G^*}(D^*)\cup W^*$ and note that~$y$ belongs to $N_{G^*}(D^*)\subseteq J^*$ as~$D$ is connected in~$G$.
We now consider the graph $G^*-D^*$.
As $G-D^*$ is connected, $G^*-D^*$ is connected.
By Claim~\ref{clm:1-2}, our new goal is to find a smallest connected vertex cover of $G^*-D^*$ that contains~$J^*$.
By our procedure, $J^*$ is an independent set of $G^*-D^*$.
As~$D$ dominates~$G$, we find that $D\setminus D^*$ dominates every vertex of $G-D^*$ that is not adjacent to a vertex of~$D^*$.
Hence the vertex~$y$, which corresponds to the set $D\setminus D^*$, is adjacent to every vertex of $(G^*-D^*)-J^*$ in the graph $G^*-D^*$.

Let $J\subseteq J^*$ consist of~$y$ and those vertices in~$J^*$ whose neighbourhood in $G^*-D^*$ is an independent set.
As~$y$ is adjacent to every vertex of $(G^*-D^*)-J^*$ in $G^*-D^*$, and we can remember the set $J^*\setminus J$, we can apply Lemma~\ref{l-double} and remove $J^*\setminus J$.
That is, it suffices to find a smallest connected vertex cover of the graph $G'=(G^*-D^*)-(J^*\setminus J)$ that contains~$J$.

As~$J^*$ is an independent set of $G^*-D^*$, we find that~$J$ is an independent set of~$G'$.
By definition, $y\in J$.
As~$y$ is adjacent to every vertex of $(G^*-D^*)-J^*$ in $G^*-D^*$, we find that~$y$ is adjacent to every vertex in $G'-J$.
By definition, the neighbours of each vertex in $J \setminus \{y\}$ form an independent set in $G'-J$.
Hence the triple $(G',J,y)$ is cover-complete.
This means that we can apply Lemma~\ref{t-ind2} to find in~$O(n^{2s+19})$ time a smallest connected vertex cover~$S'$ of~$G'$ that contains~$J$.

We translate~$S'$ in constant time into a smallest connected vertex cover~$S^*$ of $G^*-D^*$ that contains~$J^*$ by adding $J^*\setminus J$ to~$S'$.
We translate~$S^*$ in~$O(n)$ time into a smallest connected vertex cover~$S_{D^*}$ of~$G$ that contains $(D\setminus D^*)\cup W$ but no vertex of~$D^*$ by uncontracting any contracted edges.
It takes~$O(n^{2s+19})$ time to find the set~$S_{D^*}$.

\medskip
\noindent
As mentioned, at the end we pick a smallest set of the sets~$S_{D^*}$.
This set is then a smallest connected vertex cover of~$G$ that contains~$W$.
As there are $O(n^{2s^2+s+3}\cdot n^{16s^3+4})$ such sets, each of which is found in~$O(n^{2s+19})$ time, the total running time is $O(n^{21s^3 + 26})$.
The correctness of our algorithm follows immediately from the above case analysis and the description of the cases.\qed
\end{proof}

Note that the algorithm given in Theorem~\ref{t-cvc} not only solves the decision problem, but also finds a minimum connected vertex cover of a given $(sP_1+\nobreak P_5)$-free graph in polynomial time.

\end{document}